\numberwithin{equation}{section}
\newtheorem{theorem}{Theorem}[section]
\newtheorem{Lemma}[theorem]{Lemma}
\newtheorem{Corollary}[theorem]{Corollary}
\newtheorem{Proposition}[theorem]{Proposition}
\theoremstyle{definition}
\newtheorem{Remark}[theorem]{Remark}
\begin{document}

\begin{frontmatter}
\pretitle{Research Article}

\title{A pure-jump mean-reverting short rate model}

\author{\inits{M.}\fnms{Markus}~\snm{Hess}\ead[label=e1]{Markus-Hess@gmx.net}}
\address{Independent}



\markboth{M. Hess}{A pure-jump mean-reverting short rate model}

\begin{abstract}
A new multi-factor short rate model is presented which is
bounded from below by a real-valued function of time. The mean-reverting
short rate process is modeled by a sum of pure-jump Ornstein--Uhlenbeck
processes such that the related bond prices possess affine representations.
Also the dynamics of the associated instantaneous forward rate is provided
and 
a condition is derived under which the model can be market-consistently
calibrated. The analytical tractability of
this model is illustrated by the
derivation of an explicit plain vanilla option price formula. With view on
practical applications,
suitable probability distributions are proposed for
the driving jump processes. 
The paper is concluded by presenting a
post-crisis extension of
the proposed short and forward rate model.
\end{abstract}
\begin{keywords}
\kwd{Short rate}
\kwd{forward rate}
\kwd{zero-coupon bond}
\kwd{option pricing}
\kwd{market-consistent calibration}
\kwd{post-crisis model}
\kwd{L\'{e}vy process}
\kwd{multi-factor model}
\kwd{Ornstein--Uhlenbeck process}
\kwd{stochastic differential equation}
\end{keywords}
\begin{keywords}[MSC2010]%
\kwd{91G30}\kwd{60G51}\kwd{60H10}\kwd{60H30}\kwd{91B30}\kwd{91B70}
\end{keywords}

\begin{keywords}[JEL]
\kwd{G12}
\kwd{D52}
\end{keywords}

\received{\sday{23} \smonth{4} \syear{2019}}
\revised{\sday{27} \smonth{3} \syear{2020}}
\accepted{\sday{27} \smonth{3} \syear{2020}}
\publishedonline{\sday{20} \smonth{4} \syear{2020}}

\end{frontmatter}

%
%
%

\section{Introduction}\label{sec1}

Stochastic interest rate\index{interest rate} models play an important role in the modeling of
financial markets. The literature essentially distinguishes between short
rate models, forward rate\index{forward rate} models and market models.\index{market models} In the sequel, we give
a brief survey on the different classes of term structure\index{term structure} models. For more
detailed information, the reader is referred to the respective research
articles or the textbooks \cite{7,21} and \cite{26}.

Widely applied short rate models are for example the Vasicek model \cite{38},
the Hull--White model \cite{29} or the Cox--Ingersoll--Ross (CIR) model \cite{10}. In
\cite{38} and \cite{29} the short rate process is modeled by a stochastic
differential equation\index{stochastic differential equation (SDE)} (SDE) of Ornstein--Uhlenbeck (OU) type driven by a
Brownian motion (BM). As a consequence, the short rate process is normally
distributed in these models and may become arbitrarily negative. Both
features embody severe disadvantages with view on real-world market
behavior, as the distribution of interest rate\index{interest rate} data frequently deviates
from the normal distribution, while interest rates\index{interest rate} do not take arbitrarily
large negative values in practice. In the recent years, there indeed
appeared negative interest rates\index{interest rate} from time to time, but the negative values
usually were small and stayed above some lower bound. However, in \cite{10} the
short rate is modeled by a so-called square-root process. This approach
leads to a mean-reverting, strictly positive and chi-square distributed
short rate process. In \cite{6} the authors propose a time-homogeneous short
rate model which is extended by a deterministic shift function in order to
allow for negative rates\index{negative rates} and a perfect fit to the initially observed term
structure.\index{term structure} A very detailed overview on short rate models and their
properties can be found in \cite{7,16} and \cite{21}. In \cite{26} and \cite{33} short rate
models in an extended multiple-curve framework are presented. The probably
most famous forward rate\index{forward rate} model is the Heath--Jarrow--Morton (HJM) model
proposed in \cite{27}. Therein, the instantaneous forward rate\index{instantaneous forward rate} process is
modeled directly by an arithmetic SDE driven by a BM. In \cite{4} the HJM model
is extended to a jump-diffusion setup where the forward rate\index{forward rate} process is
affected by both diffusion and random jump noise. HJM type models are also
treated in \cite{7} and Chapter 7 in \cite{28}. In \cite{12} and \cite{26} HJM forward rate\index{HJM forward rate}
models in an extended multi-curve framework are discussed. The class of the so-called market models\index{market models} was introduced in \cite{5}. For example, the popular LIBOR model
belongs to this modeling class. In most cases, market models involve
geometric SDEs such that the modeled interest rates\index{interest rate} usually turn out to be
strictly positive. In order to allow the modeled rates also to take small
negative values, shifted market model approaches have been proposed
recently. Numerous properties of affine LIBOR models are provided in \cite{31}.
Market models\index{market models} are also presented in \cite{7}. In \cite{26} LIBOR models in an
extended multi-curve framework are discussed. In \cite{17} the authors propose a
L\'{e}vy forward price model in a multi-curve setup which is able to
generate negative interest rates.\index{interest rate} Term structure\index{term structure} models which are driven by
L\'{e}vy processes have also been proposed in \cite{18,19,20}.

In the present paper, we introduce a new pure-jump multi-factor short rate
model which is bounded from below by a real-valued function of time which
can be chosen arbitrarily. The short rate process is modeled by a
deterministic function plus a sum of pure-jump zero-reverting
Ornstein--Uhlenbeck processes. It turns out that the short rate is
mean-reverting and that the related bond price formula possesses an affine
representation. We also provide the dynamics of the related instantaneous
forward rate,\index{instantaneous forward rate} the latter being of HJM type. We further derive a condition
under which the forward rate\index{forward rate} model can be market-consistently calibrated.
The analytical tractability of our model is illustrated by the derivation
of a plain-vanilla option price formula with Fourier transform methods.
With view on practical applications, we make concrete assumptions on the
distribution of the jump noises and show how explicit formulas can be
deduced in these cases. We conclude the paper by presenting a post-crisis
extension of our short and forward rate\index{forward rate} model.

The outline of the paper is as follows: In Section~\ref{sec2} we introduce our new pure-jump multi-factor short rate model which is bounded from below. Section~\ref{sec3} is dedicated to the derivation of related bond price and forward rate\index{forward rate} representations. Section~\ref{sec4} is devoted to option pricing. Section~\ref{sec5} contains guidelines for a practical application, while putting a special focus on possible distributional choices for the modeling of the involved jump noises. In Section~\ref{sec6} we consider a post-crisis extension of the proposed short and forward rate\index{forward rate} model.

\vspace*{-3pt}
\section{A pure-jump multi-factor short rate model}\label{sec2}
\vspace*{-3pt}

Let $\left ( \Omega , \mathbb{F}, \mathcal{F} = \left ( \mathcal{F}_{t}
\right )_{t\in \left [ 0, T \right ]}, \mathbb{Q} \right )$ be a filtered
probability space satisfying the usual hypotheses, i.e. $\mathcal{F}_{t} =
\mathcal{F}_{t +} := \cap _{s > t} \mathcal{F}_{s}$ constitutes a
right-continuous filtration and $\mathbb{F}$ denotes the sigma-algebra
augmented by all $\mathbb{Q}$-null sets (cf. \cite{30,35}). Here,
$\mathbb{Q}$ is a risk-neutral probability measure and $T >0$ denotes a
fixed finite time horizon. In this setup, for arbitrary $n \in \mathbb{N}$
we define the stochastic short rate process $r = \left ( r_{t} \right )_{t\in
\left [ 0, T \right ]}$ via
\begin{equation}
\label{eq2.1}
r_{t} :=\mu \left ( t \right ) + \sum _{k =1}^{n} X_{t}^{k}
\end{equation}
where $\mu \left ( t \right )$ is a differentiable real-valued deterministic
$\mathcal{L}^{1}$-function and $X_{t}^{k}$ constitute pure-jump
zero-reverting Ornstein--Uhlenbeck (OU) processes satisfying the SDE
\begin{equation}
\label{eq2.2}
d X_{t}^{k} = - \lambda _{k} X_{t}^{k} dt + \sigma _{k} d L_{t}^{k}
\end{equation}
with deterministic initial values $X_{0}^{k} := x_{k} \geq 0$,
constant mean-reversion velocities $\lambda _{k} >0$ and constant volatility
coefficients $\sigma _{k} >0$. Herein, the independent, c\`{a}dl\`{a}g,
increasing, pure-jump, compound Poisson L\'{e}vy processes $L_{t}^{k}$ are
defined by
\begin{equation}
\label{eq2.3}
L_{t}^{k} := \int _{0}^{t} \int _{D_{k}} z d N_{k} \left ( s, z
\right )
\end{equation}
where $D_{k} \subseteq \mathbb{R}^{+} := \left ] 0, \infty \right [
\subset \mathbb{R}$ denote jump amplitude sets and $N_{k}$ constitute
Poisson random measures\index{Poisson random measure} (PRMs). Note that the processes $X_{t}^{k}$ and
$L_{t}^{k}$ always jump simultaneously, while $X_{t}^{k}$ decays
exponentially between its jumps due to the dampening linear drift term
appearing in (\ref{eq2.2}). A typical trajectory of a L\'{e}vy-driven OU process is
shown in Figure 15.1 in \cite{9}. Further note that the background-driving
time-homogeneous L\'{e}vy processes $L_{t}^{k}$ are increasing and thus,
constitute so-called subordinators. Moreover, for all $k\in \left \{  1,
\dots , n \right \}  $ and $\left ( s, z \right ) \in \left [ 0, T \right ] \times
D_{k}$ we define the $\mathbb{Q}$-compensated PRMs
\begin{equation}
\label{eq2.4}
d \tilde{N}_{k}^{\mathbb{Q}} \left ( s, z \right ) :=d N_{k} \left (
s, z \right ) -d \nu _{k} \left ( z \right ) ds
\end{equation}
which constitute $\left ( \mathcal{F}, \mathbb{Q} \right )$-martingale
integrators. Herein, the positive and $\sigma $-finite L\'{e}vy measures
$\nu _{k}$ satisfy the integrability conditions
\begin{equation}
\label{eq2.5}
\int _{D_{k}} \left ( 1 \wedge z \right ) d \nu _{k} \left ( z \right ) < \infty , \qquad
\int _{z >1} e^{\varpi z} d \nu _{k} \left ( z \right ) < \infty
\end{equation}
for an arbitrary constant $\varpi \in \mathbb{R}$ (cf. \cite{9,17}). For all
$k\in \left \{  1, \dots , n \right \}  $ and $t\in \left [ 0, T \right ]$ we
obtain
\begin{equation*}
\mathbb{E}_{\mathbb{Q}}  \bigl[ L_{t}^{k}  \bigr] = t \int _{D_{k}} z d
\nu _{k} \left ( z \right ), \qquad \mathbh{Var}_{\mathbb{Q}}  \bigl[ L_{t}^{k}
 \bigr] = t \int _{D_{k}} z^{2} d \nu _{k} \left ( z \right )\vadjust{\eject}
\end{equation*}
both being finite entities due to (\ref{eq2.5}) (cf. Section 1 in \cite{17}). We remark
that the currently proposed multi-factor short rate model (\ref{eq2.1}) has been
inspired by the electricity spot price model introduced in \cite{2}. Arithmetic
multi-factor models of this type have also been investigated in \cite{28} and
Section 3.2.2 in \cite{3}.

\begin{Remark}\label{rem2.1} \textbf{(a)} Since $L_{t}^{k}$ is increasing and $X_{t}^{k}$ is
zero-reverting from above, the function $\mu \left ( t \right )$ is the
mean-reversion floor or lower bound of the short rate process $r_{t}$, i.e.
it holds $r_{t} \geq \mu \left ( t \right )\ \mathbb{Q}$-a.s. for all $t\in
\left [ 0, T \right ]$, while $r_{t}$ is mean-reverting from above to $\mu
\left ( t \right )$. Also note that the presence of a Brownian motion (BM) as
driving noise in one of the processes $X_{t}^{1}, \dots , X_{t}^{n}$ would
destroy the lower boundedness of $r_{t}$. In contrast to the presented
pure-jump approach, it appears difficult to set up (lower-) bounded
processes in arithmetic BM approaches. Moreover, we recall that negative
rates\index{negative rates} have been observed in real-world post-crisis interest rate\index{interest rate} markets.
Such scenarios can easily be captured by our model by choosing, e.g. $\mu
\left ( t \right ) \equiv c$, where $c <0$ is an arbitrary constant. (In
practical applications, it may happen that the floor function $\mu \left ( t
\right )$ needs to be readjusted, if interest rates\index{interest rate} evolve lower than
anticipated. This issue has been discussed in \cite{1} in the context of the
SABR model.)

\textbf{(b)} Our pure-jump model (\ref{eq2.1}) is able to generate short rate
trajectories which closely resemble those stemming from common Brownian
motion approaches, if we allow for small jump sizes only, i.e. $D_{k} =
\left [ \epsilon _{1}^{k}, \epsilon _{2}^{k} \right ]$ with small constants $0<
\epsilon _{1}^{k} < \epsilon _{2}^{k}$. In this context, we emphasize that
the well-established pure-jump variance gamma model is likewise able to
generate suitable price trajectories, although there is neither any
diffusion component involved (cf. Section 2.6.3 in \cite{3}, Table~4.5 in \cite{9},
Section 5.3.7 in \cite{37}). On top of that, our pure-jump model might even
provide more flexibility concerning the modeling of distributional
properties than common BM approaches, since we are able to implement
tailor-made distributions via an appropriate choice of the L\'{e}vy
measures $\nu _{k}$ which fit the empirical behavior of the rates in a best
possible manner. This topic is further discussed in Section~\ref{sec5} below. For
instance, (generalized) inverse Gaussian, tempered stable or gamma
distributions might embody suitable choices (recall Appendix B.1.2 on p.
151 in \cite{37}). We finally recall that a model of the type (\ref{eq2.1}) has been
fitted to real market data in \cite{2} (yet in an electricity market context).
\end{Remark}

For a time partition $0 \leq t\leq s\leq T$ the solution of (\ref{eq2.2}) under
$\mathbb{Q}$
can be expressed as
\begin{equation}
\label{eq2.6}
X_{s}^{k} = X_{t}^{k} e^{- \lambda _{k} \left ( s-t \right )} + \sigma _{k}
\int _{t}^{s} \int _{D_{k}} e^{- \lambda _{k} \left ( s-u \right )} z d N_{k}
\left ( u, z \right )
\end{equation}
where we used (\ref{eq2.3}). The representation (\ref{eq2.6}) implies
\begin{equation}
\label{eq2.7}
X_{t}^{k} = x_{k} e^{- \lambda _{k} t} + \sigma _{k} \int _{0}^{t}
\int _{D_{k}} e^{- \lambda _{k} \left ( t-s \right )} z d N_{k} \left ( s, z
\right )
\end{equation}
where $0 \leq t\leq T$. For all $t\in \left [ 0, T \right ]$ we next define
the historical filtration
\begin{equation*}
\mathcal{F}_{t} :=\sigma  \{  L_{s}^{1}, \dots , L_{s}^{n}:0
\leq s\leq t  \}  .
\end{equation*}

\begin{Proposition}\label{prop2.2} {For} $0 \leq u\leq t\leq T$ {we
have}
\begin{gather*}
\mathbb{E}_{\mathbb{Q}} \left ( r_{t} | \mathcal{F}_{u} \right ) = \mu
\left ( t \right ) + \sum _{k =1}^{n} \left ( X_{u}^{k} e^{- \lambda _{k} \left (
t-u \right )} + \sigma _{k} \frac{1 - e^{- \lambda _{k} \left ( t-u \right )}}{\lambda _{k}}
\int _{D_{k}} z d \nu _{k} \left ( z \right ) \right ),
\\
\mathbh{Var}_{\mathbb{Q}} \left ( r_{t} | \mathcal{F}_{u} \right ) =
\sum _{k =1}^{n} \sigma _{k}^{2} \frac{1 - e^{- 2 \lambda _{k} \left ( t-u
\right )}}{2 \lambda _{k}} \int _{D_{k}} z^{2} d \nu _{k} \left ( z \right )
\end{gather*}
{where the short rate process} $r_{t}$ {satisfies} (\ref{eq2.1}).
{Both entities are finite due to} (\ref{eq2.5}).
\end{Proposition}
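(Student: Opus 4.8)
The plan is to compute the conditional expectation and variance directly from the explicit solution \eqref{eq2.6}, exploiting the independence of the compound Poisson drivers and the known first two moments of the L\'evy processes $L_t^k$ that are recorded in the excerpt. Since $r_t = \mu(t) + \sum_{k=1}^n X_t^k$ with $\mu(t)$ deterministic, and the $X^k$ are driven by independent PRMs $N_k$, both $\mathbb{E}_{\mathbb{Q}}(r_t\mid\mathcal{F}_u)$ and $\mathbh{Var}_{\mathbb{Q}}(r_t\mid\mathcal{F}_u)$ split as sums over $k$, with $\mu(t)$ contributing only to the mean. So it suffices to treat a single factor $X^k$ over the interval $[u,t]$.

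First I would take conditional expectations in \eqref{eq2.6} with $s=t$. The term $X_u^k e^{-\lambda_k(t-u)}$ is $\mathcal{F}_u$-measurable and passes through. For the stochastic integral $\sigma_k\int_u^t\!\int_{D_k} e^{-\lambda_k(t-v)}\,z\,dN_k(v,z)$, I would substitute the compensation formula \eqref{eq2.4}, i.e. $dN_k = d\tilde N_k^{\mathbb{Q}} + d\nu_k(z)\,dv$; the compensated part is an $(\mathcal{F},\mathbb{Q})$-martingale integrator with zero conditional expectation, while the drift part is deterministic and yields
\[
\sigma_k\int_u^t e^{-\lambda_k(t-v)}\,dv \int_{D_k} z\,d\nu_k(z)
= \sigma_k\,\frac{1-e^{-\lambda_k(t-u)}}{\lambda_k}\int_{D_k} z\,d\nu_k(z),
\]
which is exactly the $k$-th summand in the asserted mean. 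Summing over $k$ and adding $\mu(t)$ gives the first formula. The integrability conditions \eqref{eq2.5} guarantee $\int_{D_k} z\,d\nu_k(z)<\infty$, hence finiteness.

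For the variance, I would use that, conditionally on $\mathcal{F}_u$, only the stochastic-integral part of $X_t^k$ is random, so $\mathbh{Var}_{\mathbb{Q}}(X_t^k\mid\mathcal{F}_u)$ equals the variance of $\sigma_k\int_u^t\!\int_{D_k} e^{-\lambda_k(t-v)}\,z\,dN_k(v,z)$. By the independence of the increments of the compound Poisson measures after time $u$ (and their independence of $\mathcal{F}_u$), this conditional variance coincides with the unconditional one, and the It\^o isometry for PRM integrals (equivalently, the second-moment formula for compound Poisson integrals recorded just before Remark~\ref{rem2.1}) gives
\[
\sigma_k^2\int_u^t e^{-2\lambda_k(t-v)}\,dv \int_{D_k} z^2\,d\nu_k(z)
= \sigma_k^2\,\frac{1-e^{-2\lambda_k(t-u)}}{2\lambda_k}\int_{D_k} z^2\,d\nu_k(z).
\]
Independence of the $N_k$ across $k$ makes the variance of the sum the sum of the variances, yielding the second formula; finiteness of $\int_{D_k} z^2\,d\nu_k(z)$ follows again from \eqref{eq2.5}, as noted in the excerpt.

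The only mildly delicate point — and the one I would state carefully rather than the arithmetic — is the passage from ordinary expectation/variance of the L\'evy integrals to the \emph{conditional} versions given $\mathcal{F}_u$: one must invoke that the PRM $N_k$ restricted to $(u,t]\times D_k$ is independent of $\mathcal{F}_u = \sigma\{L_s^1,\dots,L_s^n : 0\le s\le u\}$, so that the martingale term has vanishing $\mathcal{F}_u$-conditional mean and the compound-Poisson second-moment identity may be applied conditionally. Everything else is the substitution \eqref{eq2.6}, the compensator splitting \eqref{eq2.4}, and two elementary time integrals.
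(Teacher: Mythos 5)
Your proof is correct and follows exactly the route the paper has in mind (the author omits the proof of Proposition~\ref{prop2.2} as straightforward): solve the OU equation as in (\ref{eq2.6}), split the Poisson integral via the compensator (\ref{eq2.4}), use the independence of the post-$u$ jumps from $\mathcal{F}_{u}$ together with the second-moment formula for Poisson integrals, and sum over the independent factors. Nothing is missing; the conditional-independence point you flag is indeed the only step requiring a word of justification.
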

(Here and in what follows,
we omit all proofs which are
straightforward.) Taking $u =0$ in Proposition~\ref{prop2.2}, we find for all $t\in \left [ 0, T
\right ]$
\begin{gather*}
\mathbb{E}_{\mathbb{Q}} \left [ r_{t} \right ] = \mu \left ( t \right ) +
\sum _{k =1}^{n} \left ( x_{k} e^{- \lambda _{k} t} + \sigma _{k} \frac{1 -
e^{- \lambda _{k} t}}{\lambda _{k}} \int _{D_{k}} z d \nu _{k} \left ( z
\right ) \right ),
\\
\mathbh{Var}_{\mathbb{Q}} \left [ r_{t} \right ] = \sum _{k =1}^{n}
\sigma _{k}^{2} \frac{1 - e^{- 2 \lambda _{k} t}}{2 \lambda _{k}}
\int _{D_{k}} z^{2} d \nu _{k} \left ( z \right ).
\end{gather*}Note that it is
possible to identify the entities
$\mathbb{E}_{\mathbb{Q}} \left [ X_{t}^{k} \right ]$ and
$\mathbh{Var}_{\mathbb{Q}} \left [ X_{t}^{k} \right ]$ inside the latter
equations due to (\ref{eq2.1}). Moreover, suppose that $\mu \left ( t \right )
\rightarrow \tilde{\mu }$ for $t\rightarrow \infty $ where $\tilde{\mu }
\in \mathbb{R}$ is a finite constant. Then we observe
\begin{align*}
&\lim _{t\rightarrow \infty } \mathbb{E}_{\mathbb{Q}} \left [ r_{t} \right ] =
\tilde{\mu } + \sum _{k =1}^{n} \frac{\sigma _{k}}{\lambda _{k}} \int _{D_{k}}
z d \nu _{k} \left ( z \right ),
\\
&  \lim _{t\rightarrow \infty }
\mathbh{Var}_{\mathbb{Q}} \left [ r_{t} \right ] = \sum _{k =1}^{n}
\frac{\sigma _{k}^{2}}{2 \lambda _{k}} \int _{D_{k}} z^{2} d \nu _{k} \left ( z
\right )
\end{align*}
which both constitute finite constants. This limit behavior entirely stands
in line with the requirements imposed on short rate models claimed on p. 46
in \cite{7}.
In the next step, we investigate the characteristic function of $r_{t}$
which is defined via
\begin{equation*}
\Phi _{r_{t}} \left ( u \right ) := \mathbb{E}_{\mathbb{Q}}  \bigl[
e^{iu r_{t}}  \bigr]
\end{equation*}
where $u \in \mathbb{R}$ and $t\in \left [ 0, T \right ]$.

\begin{Proposition}\label{prop2.3} {For} $k\in \left \{  1, \dots , n \right \}  $
{we define the deterministic functions}
\begin{align*}
&\Lambda _{k} \left ( s, z \right ) := \sigma _{k} e^{- \lambda _{k}
\left ( t-s \right )} z,\qquad  \psi _{k} \left ( t, u \right ) :=iu e^{-
\lambda _{k} t},
\\
& \rho _{k} \left ( t, u \right ) := \int _{0}^{t}
\int _{D_{k}}  \bigl[ e^{iu \Lambda _{k} \left ( s, z \right )} - 1  \bigr] d
\nu _{k} \left ( z \right ) ds.
\end{align*}
{Then for any} $u \in \mathbb{R}$ {and} $t\in \left [ 0, T
\right ]$ {the characteristic function of} $r_{t}$ {can be
decomposed as}
\begin{equation*}
\Phi _{r_{t}} \left ( u \right ) = e^{iu\mu \left ( t \right )} \prod _{k=1}^{n}
\Phi _{X_{t}^{k}} \left ( u \right )
\end{equation*}
{where the characteristic function of} $X_{t}^{k}$ {is given
by}
\begin{equation*}
\Phi _{X_{t}^{k}} \left ( u \right ) = e^{\psi _{k} \left ( t,u \right ) x_{k} +
\rho _{k} \left ( t,u \right )}
\end{equation*}
{with deterministic and affine characteristic exponent}.
\end{Proposition}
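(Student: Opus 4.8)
The plan is to combine the explicit solution \eqref{eq2.7} with the mutual independence of the driving subordinators and the exponential (L\'evy--Khintchine) formula for integrals against a Poisson random measure.

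First I would peel off the deterministic shift and factorize. Since $r_t = \mu(t) + \sum_{k=1}^{n} X_t^k$ with $\mu(t)$ deterministic, $\Phi_{r_t}(u) = e^{iu\mu(t)}\,\mathbb{E}_{\mathbb{Q}}\bigl[\prod_{k=1}^{n} e^{iu X_t^k}\bigr]$. By assumption $L^1,\dots,L^n$ are independent, and by \eqref{eq2.7} each $X_t^k$ is a measurable (indeed deterministic-plus-Poisson-integral) functional of $L^k$ alone; hence $X_t^1,\dots,X_t^n$ are independent and the expectation splits as $\prod_{k=1}^{n}\Phi_{X_t^k}(u)$, which is the claimed decomposition.

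Next I fix $k$ and compute $\Phi_{X_t^k}(u)$. By \eqref{eq2.7}, $X_t^k = x_k e^{-\lambda_k t} + \int_0^t\int_{D_k}\Lambda_k(s,z)\,dN_k(s,z)$ with $\Lambda_k(s,z)=\sigma_k e^{-\lambda_k(t-s)}z$. The first summand is deterministic and contributes the factor $e^{iu x_k e^{-\lambda_k t}} = e^{\psi_k(t,u)\,x_k}$. For the stochastic integral, $N_k$ is a Poisson random measure on $[0,T]\times D_k$ with intensity $\nu_k(dz)\,ds$, and the exponential formula for Poisson integrals gives
\[
\mathbb{E}_{\mathbb{Q}}\Bigl[\exp\Bigl(iu\int_0^t\!\!\int_{D_k}\Lambda_k(s,z)\,dN_k(s,z)\Bigr)\Bigr] = \exp\Bigl(\int_0^t\!\!\int_{D_k}\bigl(e^{iu\Lambda_k(s,z)}-1\bigr)\,d\nu_k(z)\,ds\Bigr) = e^{\rho_k(t,u)},
\]
provided the double integral on the right converges. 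Multiplying the two pieces yields $\Phi_{X_t^k}(u)=e^{\psi_k(t,u)x_k+\rho_k(t,u)}$, and the exponent is affine (in fact linear) in the initial value $x_k$, as asserted. The convergence of $\rho_k$ I would check directly: on $[0,t]\times D_k$ one has $\Lambda_k(s,z)\in[0,\sigma_k z]$, so $|e^{iu\Lambda_k(s,z)}-1|\le 2\wedge(|u|\Lambda_k(s,z))\le C_k\,(1\wedge z)$ for a constant $C_k=C_k(u,\sigma_k,\lambda_k,t)$, whence $\int_0^t\!\int_{D_k}|e^{iu\Lambda_k}-1|\,d\nu_k\,ds\le t\,C_k\int_{D_k}(1\wedge z)\,d\nu_k(z)<\infty$ by the first condition in \eqref{eq2.5}.

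The only genuinely substantive step is the exponential formula for the Poisson integral with a time-dependent integrand $\Lambda_k$. For the compound-Poisson background noise here it follows from the standard approximation of $\Lambda_k$ by simple integrands together with the independent-increments property of $N_k$, or it may simply be quoted from the L\'evy-process literature cited around \eqref{eq2.5}; the remaining manipulations are bookkeeping.
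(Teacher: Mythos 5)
Your argument is correct and is exactly the route the paper has in mind: the proof of Proposition~\ref{prop2.3} is omitted as straightforward, and your steps (factorization via independence of the $L^k$, the explicit solution \eqref{eq2.7} splitting into the deterministic part $x_k e^{-\lambda_k t}$ and the Poisson integral, and the L\'evy--Khinchin/exponential formula for the time-inhomogeneous integrand) mirror precisely the technique the paper spells out in the proof of Proposition~\ref{prop3.1}. Your additional integrability check of $\rho_k$ via \eqref{eq2.5} is a harmless bonus beyond what the paper records.
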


An immediate consequence of Proposition~\ref{prop2.3} is the subsequent affine
representation
\begin{equation}
\label{eq2.8}
\Phi _{r_{t}} \left ( u \right ) = \prod _{k=1}^{n} e^{\psi _{k} \left ( t,u
\right ) x_{k} + \phi _{k} \left ( t,u \right )}
\end{equation}
where we introduced the deterministic functions
\begin{equation*}
\phi _{k} \left ( t,u \right ) := \rho _{k} \left ( t,u \right ) +iu
{\mu \left ( t \right )} / {n}.
\end{equation*}We emphasize that $r_{t}$ is an affine function of the factors $X_{t}^{1},
\dots , X_{t}^{n}$ such that our model turns out to be a special case of the
affine short rate models considered in Section 3.3 in \cite{14}. To read more on
affine processes we refer to \cite{7,14,16,26} and \cite{31}. We next define
the moment generating function of $r_{t}$ via
\begin{equation}
\label{eq2.9}
\kappa _{r_{t}} \left ( v \right ) := \mathbb{E}_{\mathbb{Q}} \left [
e^{v r_{t}} \right ]
\end{equation}
which implies the well-known equalities $\Phi _{r_{t}} \left ( u \right ) =
\kappa _{r_{t}} \left ( iu \right )$ and $\kappa _{r_{t}} \left ( v \right ) =
\Phi _{r_{t}} \left ( - iv \right )$. Note that the moment generating function
$\kappa _{r_{t}} \left ( v \right )$ is well-defined due to (\ref{eq2.5}). In the
sequel, we derive the time dynamics of the short rate process.

\begin{Proposition}\label{prop2.4} {For all} $t \in \left [ 0,T \right ]$
{the short rate process follows the dynamics}
\begin{equation}
\label{eq2.10}
d r_{t} = \left ( \mu ' \left ( t \right ) - \sum _{k=1}^{n} \lambda _{k}
X_{t}^{k} \right ) dt+ \sum _{k=1}^{n} \sigma _{k} \int _{D_{k}} z d N_{k}
\left ( t,z \right ).
\end{equation}
\end{Proposition}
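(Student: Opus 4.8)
The plan is to differentiate the defining relation (\ref{eq2.1}) term by term and then insert the Ornstein--Uhlenbeck dynamics. Since $r_{t} = \mu(t) + \sum_{k=1}^{n} X_{t}^{k}$, with $\mu$ a differentiable $\mathcal{L}^{1}$-function and the sum finite, linearity of the stochastic differential yields $d r_{t} = \mu'(t)\,dt + \sum_{k=1}^{n} d X_{t}^{k}$; the term $\mu'(t)\,dt$ for the deterministic part is legitimate because $\mu(t) = \mu(0) + \int_{0}^{t} \mu'(s)\,ds$ by assumption on $\mu$, and the interchange of $d$ with the finite sum is trivial.

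First I would substitute the SDE (\ref{eq2.2}), namely $d X_{t}^{k} = -\lambda_{k} X_{t}^{k}\,dt + \sigma_{k}\,d L_{t}^{k}$, which already gives $d r_{t} = \bigl(\mu'(t) - \sum_{k=1}^{n} \lambda_{k} X_{t}^{k}\bigr)dt + \sum_{k=1}^{n} \sigma_{k}\,d L_{t}^{k}$. Then, reading off (\ref{eq2.3}), the subordinator $L^{k}$ satisfies $d L_{t}^{k} = \int_{D_{k}} z\,d N_{k}(t,z)$, and inserting this into the previous line produces exactly (\ref{eq2.10}). Alternatively, one may verify the claim directly from the explicit solution (\ref{eq2.7}): using $X_{t}^{k} = x_{k} e^{-\lambda_{k} t} + \sigma_{k} \int_{0}^{t}\int_{D_{k}} e^{-\lambda_{k}(t-s)} z\,d N_{k}(s,z)$ and differentiating the exponential kernel $e^{-\lambda_{k}(t-s)}$ in $t$ under the integral sign recovers $d X_{t}^{k} = -\lambda_{k} X_{t}^{k}\,dt + \sigma_{k} \int_{D_{k}} z\,d N_{k}(t,z)$, the standard route for L\'{e}vy-driven OU processes; summing over $k$ and adding $\mu'(t)\,dt$ then closes the argument.

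I do not expect a genuine obstacle here. The only point worth a word of care --- and it is minor --- is the interchange of differentiation in $t$ with the stochastic integral appearing in (\ref{eq2.7}), which is justified since the integrability conditions (\ref{eq2.5}) keep all the relevant integrals finite (so the compound Poisson integrals converge and the OU process is a well-defined c\`{a}dl\`{a}g semimartingale). Note also that no compensation of $N_{k}$ is required at this stage, because the drift $-\lambda_{k} X_{t}^{k}\,dt$ arises from the exponential damping kernel in (\ref{eq2.7}) rather than from the jump measure itself. Consistent with the authors' stated convention, this is one of the ``straightforward'' proofs.
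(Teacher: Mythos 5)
Your argument is correct and is precisely the "straightforward" computation the paper omits: differentiate (\ref{eq2.1}), insert the OU dynamics (\ref{eq2.2}), and rewrite $dL_{t}^{k}$ via (\ref{eq2.3}) to obtain (\ref{eq2.10}). Since the paper gives no proof of Proposition \ref{prop2.4}, there is nothing to compare beyond noting that your route is the intended one and needs no compensation of the jump measure, as you correctly observe.
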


\begin{Remark}\label{rem2.5} We recall that our model constitutes an extension of
the short rate model proposed in \cite{6}, whereas we work with multiple
pure-jump processes $L_{t}^{1}, \dots , L_{t}^{n}$ as driving noises instead
of the single Brownian motion $W_{t}$ appearing in Eq. (1) in \cite{6}.
Moreover, comparing Eq. (3) in \cite{6} with Eq.~(\ref{eq2.1}) above, we see that
$x_{t}$ and $\varphi \left ( t;\alpha \right )$ in \cite{6} correspond in our
setup to $\sum _{k=1}^{n} X_{t}^{k}$ and $\mu \left ( t \right )$,
respectively.
\end{Remark}

\section{Bond prices and forward rates\index{forward rate}}\label{sec3}

In this section, we derive representations for zero-coupon bond prices,
forward rates\index{forward rate} and the interest rate\index{interest rate} curve related to the short rate model
introduced in Section~\ref{sec2}. To begin with, we introduce a bank account with
stochastic interest rate\index{interest rate} $r_{t}$ satisfying
\begin{equation}
\label{eq3.1}
d \beta _{t} = r_{t} \beta _{t} dt
\end{equation}
with normalized initial capital $\beta _{0} =1$. The solution of (\ref{eq3.1}) reads
as
\begin{equation}
\label{eq3.2}
\beta _{t} = \exp \left \{  \int _{0}^{t} r_{s} ds \right \}
\end{equation}
where $t \in \left [ 0,T \right ]$. In this setup, the (zero-coupon) bond
price at time $t \leq T$ with maturity $T$ is given by
\begin{equation}
\label{eq3.3}
P \left ( t,T \right ) := \beta _{t} \mathbb{E}_{\mathbb{Q}} \left (
\beta _{T}^{-1} | \mathcal{F}_{t} \right ) = \mathbb{E}_{\mathbb{Q}}
\left ( \exp \left \{  - \int _{t}^{T} r_{s} ds \right \}  \bigg| \mathcal{F}_{t}
\right )
\end{equation}
where $t \in \left [ 0,T \right ]$ (cf. \cite{6,7,26}). Note that $P \left (
t,T \right ) >0\ \mathbb{Q}$-a.s. $\forall \ t \in \left [ 0,T \right ]$ by
construction. Since $r_{t} \geq \mu \left ( t \right )\ \mathbb{Q}$-a.s.
$\forall \ t \in \left [ 0,T \right ]$ [recall Remark~\ref{rem2.1} (a)], we
observe
\begin{equation}
\label{eq3.4}
P \left ( t,T \right ) \leq M_{t,T} := \exp \left \{  - \int _{t}^{T}
\mu \left ( s \right ) ds \right \}
\end{equation}
$\mathbb{Q}$-a.s. $\forall \ t \in \left [ 0,T \right ]$ due to (\ref{eq3.3}) and
the monotonicity of conditional expectations. The upper bound $M_{t,T}$
appearing in (\ref{eq3.4}) is deterministic and strictly positive for all $0 \leq t
\leq T$. If $\mu \left ( t \right ) \geq 0$, then it holds $P \left ( t,T
\right ) \leq 1\ \mathbb{Q}$-a.s. $\forall \ t \in \left [ 0,T \right ]$
(similar to, e.g., the CIR model \cite{10}; also see \cite{7,21}). On the other
hand, if $\mu \left ( t \right ) <0$, then we only know that $M_{t,T} >1$.

\begin{Proposition}\label{prop3.1} {For} $k\in \left \{  1,\dots ,n \right \}  $
{and} $t\in \left [ 0,T \right ]$ {we define the deterministic
functions}
\begin{equation}
\label{eq3.5}
\begin{split}
&A_{k} \left ( t,T \right ) := \int _{t}^{T} \left ( - \frac{\mu \left (
s \right )}{n} + \int _{D_{k}}  \bigl[ e^{\sigma _{k} B_{k} \left ( s,T \right )
z} -1  \bigr] d \nu _{k} \left ( z \right ) \right ) ds,
\\
& B_{k} \left ( t,T
\right ) := \frac{e^{- \lambda _{k} \left ( T-t \right )} -1}{\lambda _{k}} \leq 0.
\end{split}
\end{equation}Then the bond price at time $t\leq T$ {with maturity} $T$
{possesses the affine representation}
\begin{equation}
\label{eq3.6}
P \left ( t,T \right ) = \prod _{k=1}^{n} e^{A_{k} \left ( t,T \right ) + B_{k}
\left ( t,T \right ) X_{t}^{k}}
\end{equation}
{where the factors} $X_{t}^{k}$ {satisfy} (\ref{eq2.7}).
\end{Proposition}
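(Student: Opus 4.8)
The plan is to compute the conditional expectation in \eqref{eq3.3} directly, using the independence of the driving subordinators and the explicit solution \eqref{eq2.6} of the Ornstein--Uhlenbeck SDE. First I would substitute $r_s = \mu(s) + \sum_{k=1}^n X_s^k$ into \eqref{eq3.3} and split the deterministic part off, writing
\begin{equation*}
P(t,T) = \exp\left\{ - \int_t^T \mu(s)\, ds \right\}\, \mathbb{E}_{\mathbb{Q}}\left( \exp\left\{ - \sum_{k=1}^n \int_t^T X_s^k\, ds \right\} \bigg| \mathcal{F}_t \right).
\end{equation*}
Since the $L^k$ (hence the $X^k$) are independent and $X^k$ depends only on $L^k$, the conditional expectation factorizes into $\prod_{k=1}^n \mathbb{E}_{\mathbb{Q}}( \exp\{ - \int_t^T X_s^k\, ds \} \mid \mathcal{F}_t )$.

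Next I would evaluate the inner integral $\int_t^T X_s^k\, ds$ by plugging in \eqref{eq2.6}. The term $X_t^k e^{-\lambda_k(s-t)}$ integrates to $X_t^k \cdot \frac{1 - e^{-\lambda_k(T-t)}}{\lambda_k} = - X_t^k B_k(t,T)$, which is $\mathcal{F}_t$-measurable and comes out of the expectation, producing the factor $e^{B_k(t,T) X_t^k}$. For the stochastic part, I would interchange the order of integration (Fubini, justified by positivity of the integrand and the subordinator structure) so that
\begin{equation*}
\sigma_k \int_t^T \!\! \int_t^s \!\! \int_{D_k} e^{-\lambda_k(s-u)} z\, dN_k(u,z)\, ds
= \int_t^T \!\! \int_{D_k} \left( \sigma_k \int_u^T e^{-\lambda_k(s-u)}\, ds \right) z\, dN_k(u,z),
\end{equation*}
and the inner $ds$-integral equals $\sigma_k \cdot \frac{1 - e^{-\lambda_k(T-u)}}{\lambda_k} = - \sigma_k B_k(u,T)$.

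Then I would apply the exponential formula for Poisson random integrals: for a PRM $N_k$ with intensity $\nu_k(dz)\,du$ and a deterministic integrand $g(u,z)$ on $[t,T]\times D_k$ (here $g(u,z) = \sigma_k B_k(u,T) z$), one has
\begin{equation*}
\mathbb{E}_{\mathbb{Q}}\left[ \exp\left\{ \int_t^T \!\! \int_{D_k} g(u,z)\, dN_k(u,z) \right\} \right] = \exp\left\{ \int_t^T \!\! \int_{D_k} \bigl( e^{g(u,z)} - 1 \bigr)\, d\nu_k(z)\, du \right\},
\end{equation*}
which is the Lévy--Khintchine/Campbell formula for compound Poisson integrals; the integrability condition \eqref{eq2.5} together with $B_k \le 0$ (so $g \le 0$ and $e^g - 1$ is bounded) guarantees the right-hand side is finite. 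With $g(u,z) = \sigma_k B_k(u,T) z$ this yields exactly $e^{A_k(t,T) + B_k(t,T) X_t^k}$ after recognizing the definition of $A_k(t,T)$ in \eqref{eq3.5} (the $-\mu(s)/n$ term being the $k$-th share of the deterministic prefactor $-\int_t^T \mu(s)\,ds$). Multiplying the $n$ independent factors gives \eqref{eq3.6}.

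The main obstacle is largely bookkeeping rather than a genuine difficulty: one must carefully justify the Fubini interchange for the stochastic integral (using that $L^k$ is increasing/finite-variation, so $\int_t^T X_s^k\,ds$ is an ordinary pathwise integral) and verify that conditioning on $\mathcal{F}_t$ reduces the compound Poisson integral over $[t,T]$ to one driven by an independent increment, so that the exponential formula applies with $X_t^k$ treated as a constant. The fact that $B_k(s,T) \le 0$ makes $e^{\sigma_k B_k(s,T) z} - 1 \in [-1,0]$, so all the $\nu_k$-integrals converge by the first condition in \eqref{eq2.5}, and no appeal to the exponential-moment condition is even needed here.
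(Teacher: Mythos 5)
Your proposal is correct and follows essentially the same route as the paper's proof: substitute the explicit OU solution \eqref{eq2.6} into \eqref{eq3.3}, use Fubini to collapse the double time integral and identify $B_k$, exploit independence (and independent increments, so the $\mathcal{F}_t$-conditioning reduces to an unconditional expectation with $X_t^k$ fixed), and apply the L\'{e}vy--Khinchin/exponential formula for the deterministic-integrand Poisson integrals to identify $A_k$. The only differences are cosmetic (you factorize over $k$ before extracting the $X_t^k$ term, and you add the correct observation that $B_k\leq 0$ makes the first condition in \eqref{eq2.5} sufficient), so nothing further is needed.
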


\begin{proof} First of all, we put (\ref{eq2.6}) into (\ref{eq2.1}) and obtain
\begin{equation*}
r_{s} =\mu \left ( s \right ) + \sum _{k=1}^{n} X_{t}^{k} e^{- \lambda _{k}
\left ( s-t \right )} + \sum _{k=1}^{n} \int _{t}^{s} \int _{D_{k}} \sigma _{k}
e^{- \lambda _{k} \left ( s-u \right )} z d N_{k} \left ( u,z \right )
\end{equation*}
where $0\leq t\leq s\leq T$. We next substitute the latter equation into
(\ref{eq3.3}), hereafter apply Fubini's theorem and identify the functions $B_{k}
\left ( \boldsymbol{\cdot },T \right )$. This procedure yields
\begin{align*}
P \left ( t,T \right ) &= \exp \left \{  - \int _{t}^{T} \mu \left ( s \right ) ds
+ \sum _{k=1}^{n} B_{k} \left ( t,T \right ) X_{t}^{k} \right \}
\\
&\quad   \times
\mathbb{E}_{\mathbb{Q}} \left ( \exp \left \{  \sum _{k=1}^{n} \int _{t}^{T}
\int _{D_{k}} \sigma _{k} B_{k} \left ( s,T \right ) z d N_{k} \left ( s,z
\right ) \right \}  \bigg| \mathcal{F}_{t} \right ).
\end{align*}Taking the independent increment property of the ($\mathbb{Q}$-independent)
L\'{e}vy processes $L^{1},\dots , L^{n}$ into account, we obtain
\begin{align*}
&\mathbb{E}_{\mathbb{Q}} \left ( \exp \left \{  \sum _{k=1}^{n} \int _{t}^{T}
\int _{D_{k}} \sigma _{k} B_{k} \left ( s,T \right ) z d N_{k} \left ( s,z
\right ) \right \}  \bigg| \mathcal{F}_{t} \right )
\\
&\quad  = \prod _{k=1}^{n}
\mathbb{E}_{\mathbb{Q}} \left [ \exp \left \{  \int _{t}^{T} \int _{D_{k}}
\sigma _{k} B_{k} \left ( s,T \right ) z d N_{k} \left ( s,z \right ) \right \}
\right ]
\end{align*}
where $t\in \left [ 0,T \right ]$. The 
usual expectations appearing here can be
handled by the L\'{e}vy--Khinchin formula for additive processes (see, e.g.,
\cite{9,30,36}) which leads us to
\begin{align*}
&\mathbb{E}_{\mathbb{Q}} \left [ \exp \left \{  \int _{t}^{T} \int _{D_{k}}
\sigma _{k} B_{k} \left ( s,T \right ) z d N_{k} \left ( s,z \right ) \right \}
\right ]
\\
&\quad  = \exp \left \{  \int _{t}^{T} \int _{D_{k}}  \bigl[ e^{\sigma _{k} B_{k}
\left ( s,T \right ) z} -1  \bigr] d \nu _{k} \left ( z \right ) ds \right \}  .
\end{align*}Putting the latter equations together and identifying the functions $A_{k}
\left ( \boldsymbol{\cdot },T \right )$, we end up with the asserted
representation (\ref{eq3.6}).
\end{proof}

Recall that the bond price in (\ref{eq3.6}) is the product of exponential affine
functions of the factors $X_{t}^{1},\dots , X_{t}^{n}$ (but not of $r_{t}$).
Also note that for all $k\in \left \{  1,\dots ,n \right \}  $ and $t\in \left [
0,T \right ]$ it holds
\begin{equation}
\label{eq3.7}
A_{k} \left ( t,t \right ) = B_{k} \left ( t,t \right ) =0.
\end{equation}We remark that the functions $B_{k} \left ( t,T \right )$ in (\ref{eq3.5}) possess
the same structure as the corresponding ones in the Vasicek model (cf.
\cite{38}, or \cite{7,16,21}). For all $t\in \left [ 0,T \right ]$ Eq.~(\ref{eq3.6}) can
be rewritten as
\begin{equation}
\label{eq3.8}
P \left ( t,T \right ) = \exp \left \{  \sum _{k=1}^{n} \left [ A_{k} \left ( t,T
\right ) + B_{k} \left ( t,T \right ) X_{t}^{k} \right ] \right \}
\end{equation}
which implies $P \left ( T,T \right ) =1$ due to (\ref{eq3.7}). Moreover, from (\ref{eq3.5})
we infer the time derivatives
\begin{equation}
\label{eq3.9}
A_{k} ' \left ( t,T \right ) = \frac{\mu \left ( t \right )}{n} - \int _{D_{k}}
 \bigl[ e^{\sigma _{k} B_{k} \left ( t,T \right ) z} -1  \bigr] d \nu _{k}
\left ( z \right ),\qquad  B_{k} ' \left ( t,T \right ) = e^{- \lambda _{k} \left ( T-t
\right )} >0
\end{equation}
where $A_{k} ' := \partial _{t} A_{k}$ and $B_{k} ' :=
\partial _{t} B_{k}$. Hence, the functions $B_{k} \left ( t,T \right ) \leq 0$
are strictly monotone increasing in $t$. Also note that the formulas found
in (\ref{eq3.9}) entirely stand in line with those claimed in (4.4)--(4.5) in \cite{31}.
From (\ref{eq3.5}), (\ref{eq3.7}) and (\ref{eq3.9}) we deduce the following system of ordinary
differential equations\index{ordinary differential equations (ODE)} (ODEs)
\begin{align*}
&A_{k} \left ( t,T \right ) =- \int _{t}^{T} A_{k} ' \left ( s,T \right ) ds,\qquad
B_{k} ' \left ( t,T \right ) =1+ \lambda _{k} B_{k} \left ( t,T \right ),
\\
& A_{k}
\left ( T,T \right ) = B_{k} \left ( T,T \right ) =0
\end{align*}
where $t\in \left [ 0,T \right ]$ and $k\in \left \{  1,\dots ,n \right \}  $. We
are now prepared to derive the time dynamics of the bond price process
$\left ( P \left ( t,T \right ) \right )_{t\in \left [ 0,T \right ]}$.

\begin{Proposition}\label{prop3.2} {For} $k\in \left \{  1,\dots ,n \right \}  $,
$t\in \left [ 0,T \right ]$ {and} $z\in D_{k}$ {we define the
deterministic functions}
\begin{equation}
\label{eq3.10}
\zeta _{k} \left ( t,T,z \right ) := e^{\sigma _{k} B_{k} \left ( t,T
\right ) z} -1
\end{equation}
{with} $B_{k} \left ( t,T \right )$ {as in} (\ref{eq3.5}).
{Then the bond price satisfies the} $t$-dynamics under
$\mathbb{Q}$
\begin{equation}
\label{eq3.11}
\frac{dP \left ( t,T \right )}{P \left ( t-,T \right )} = r_{t} dt+
\sum _{k=1}^{n} \int _{D_{k}} \zeta _{k} \left ( t,T,z \right ) d
\tilde{N}_{k}^{\mathbb{Q}} \left ( t,z \right ).
\end{equation}
\end{Proposition}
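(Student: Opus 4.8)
The plan is to start from the affine bond price formula in Proposition~\ref{prop3.1} and apply the change-of-variables formula for semimartingales with jumps to the exponent. Since $P(t,T)>0$ we may set
\[
Y_{t} := \ln P(t,T) = \sum_{k=1}^{n}\bigl[A_{k}(t,T) + B_{k}(t,T)\,X_{t}^{k}\bigr],
\]
and the first step is to compute $dY_{t}$. By the SDE~(\ref{eq2.2}) together with (\ref{eq2.3}) the factor has dynamics $dX_{t}^{k} = -\lambda_{k}X_{t}^{k}\,dt + \sigma_{k}\int_{D_{k}} z\,dN_{k}(t,z)$, so using the product rule and the time derivatives recorded in (\ref{eq3.9}) one gets, with no continuous martingale part present,
\begin{align*}
dY_{t} &= \sum_{k=1}^{n}\Bigl[A_{k}'(t,T) + \bigl(B_{k}'(t,T) - \lambda_{k}B_{k}(t,T)\bigr)X_{t}^{k}\Bigr]dt + \sum_{k=1}^{n}\int_{D_{k}}\sigma_{k}B_{k}(t,T)\,z\,dN_{k}(t,z)\\
&= \Bigl(r_{t} - \sum_{k=1}^{n}\int_{D_{k}}\zeta_{k}(t,T,z)\,d\nu_{k}(z)\Bigr)dt + \sum_{k=1}^{n}\int_{D_{k}}\sigma_{k}B_{k}(t,T)\,z\,dN_{k}(t,z),
\end{align*}
where the second line uses the cancellation $B_{k}'(t,T) - \lambda_{k}B_{k}(t,T) = 1$ coming from (\ref{eq3.5})/(\ref{eq3.9}), the identity $A_{k}'(t,T) = \mu(t)/n - \int_{D_{k}}\zeta_{k}(t,T,z)\,d\nu_{k}(z)$, and $r_{t} = \mu(t) + \sum_{k}X_{t}^{k}$ from (\ref{eq2.1}).

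Next I would pass from $Y$ to $P(t,T) = e^{Y_{t}}$ via the Itô formula for finite-variation-plus-jump processes: the second-order term vanishes, and a jump of $Y$ at time $t$ produces the increment $P(t-,T)\bigl(e^{\Delta Y_{t}}-1\bigr)$. Because the compound Poisson processes $L^{1},\dots,L^{n}$ are $\mathbb{Q}$-independent, a.s.\ no two of the measures $N_{k}$ charge the same instant, so at a jump of $N_{k}$ of amplitude $z$ one has $\Delta Y_{t} = \sigma_{k}B_{k}(t,T)\,z$ and hence $e^{\Delta Y_{t}}-1 = \zeta_{k}(t,T,z)$ by (\ref{eq3.10}). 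Rewriting the sum of jumps as an integral against the Poisson measures then yields
\[
dP(t,T) = P(t-,T)\Bigl(r_{t} - \sum_{k=1}^{n}\int_{D_{k}}\zeta_{k}(t,T,z)\,d\nu_{k}(z)\Bigr)dt + P(t-,T)\sum_{k=1}^{n}\int_{D_{k}}\zeta_{k}(t,T,z)\,dN_{k}(t,z).
\]

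Finally I would compensate the jump integral by inserting $dN_{k}(t,z) = d\tilde{N}_{k}^{\mathbb{Q}}(t,z) + d\nu_{k}(z)\,dt$ from (\ref{eq2.4}); the resulting $d\nu_{k}\,dt$-terms cancel exactly the $-\sum_{k}\int_{D_{k}}\zeta_{k}\,d\nu_{k}$ contribution to the drift, and dividing by $P(t-,T)>0$ gives (\ref{eq3.11}). The only analytic point to check is that the integrals against $N_{k}$ and $\tilde{N}_{k}^{\mathbb{Q}}$ make sense: since $B_{k}(t,T)\le 0$ one has $|\zeta_{k}(t,T,z)| = 1 - e^{\sigma_{k}B_{k}(t,T)z} \le \bigl(\sigma_{k}|B_{k}(t,T)|\vee 1\bigr)(1\wedge z)$, so $\int_{D_{k}}|\zeta_{k}(t,T,z)|\,d\nu_{k}(z)<\infty$ by the first condition in (\ref{eq2.5}) (and, $L^{k}$ being compound Poisson, the jump sum is in fact finite on $[0,T]$, so no localization is needed); a quick consistency check is that $\beta_{t}^{-1}P(t,T)$ is then indeed a $\mathbb{Q}$-local martingale, as it must be by (\ref{eq3.3}). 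I expect the drift bookkeeping in the first step — in particular exploiting $B_{k}' - \lambda_{k}B_{k} = 1$ to collapse the $X_{t}^{k}$-terms into $r_{t}$ — to be the only substantive part of the argument.
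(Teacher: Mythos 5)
Your proposal is correct, and it is exactly the argument the paper has in mind when it omits this proof as straightforward: apply the Itô formula for finite-variation jump processes to $\log P(t,T)=\sum_{k}[A_{k}(t,T)+B_{k}(t,T)X_{t}^{k}]$ from (\ref{eq3.6}), use (\ref{eq3.9}) together with $B_{k}'-\lambda_{k}B_{k}=1$ to collapse the drift into $r_{t}-\sum_{k}\int_{D_{k}}\zeta_{k}\,d\nu_{k}$, exponentiate so each jump contributes the factor $e^{\sigma_{k}B_{k}(t,T)z}-1=\zeta_{k}(t,T,z)$, and compensate via (\ref{eq2.4}). Your integrability remark (indeed automatic here since the $L^{k}$ are compound Poisson, so the $\nu_{k}$ are finite) and the no-common-jumps observation are the right supporting details, so nothing is missing.
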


Recall that it holds $\zeta _{k} \left ( t,T,z \right ) \leq 0$, since
$\sigma _{k} \ B_{k} \left ( t,T \right ) \ z\leq 0$ for all $k$, $t$ and $z$.
We stress that (\ref{eq3.11}) possesses the same structure as the corresponding Eq.
(10.9) in \cite{37}, whereas the latter stems from a Brownian motion model
without jumps. In the next step, we provide the solution of the SDE
(\ref{eq3.11}).

\begin{Proposition}\label{prop3.3} {For all} $t\in \left [ 0,T \right ]$
{the solution of} (\ref{eq3.11}) {under} $\mathbb{Q}$
{reads as}
\begin{align}
\label{eq3.12}
P \left ( t,T \right ) &=P \left ( 0,T \right ) \exp  \Biggl\{  \int _{0}^{t} r_{s}
ds - \sum _{k=1}^{n} \int _{0}^{t} \int _{D_{k}} \zeta _{k} \left ( s,T,z
\right ) d \nu _{k} \left ( z \right ) ds
\nonumber\\
&\quad + \sum _{k=1}^{n} \int _{0}^{t}
\int _{D_{k}} \sigma _{k} B_{k} \left ( s,T \right ) z d N_{k} \left ( s,z
\right ) \Biggr\}
\end{align}
{where the initial value} $P \left ( 0,T \right )$ {is
deterministic and fulfills} $P \left ( 0,T \right ) >0$.
\end{Proposition}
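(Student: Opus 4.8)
The plan is to verify that the right-hand side of \eqref{eq3.12} solves the linear SDE \eqref{eq3.11} with the correct initial condition, which by uniqueness of solutions for this type of stochastic exponential equation establishes the claim. First I would define the process $Y_t$ to be the right-hand side of \eqref{eq3.12}, i.e. $Y_t := P(0,T) \exp\{Z_t\}$ where $Z_t$ collects the three integral terms in the exponent. At $t=0$ all three integrals vanish, so $Y_0 = P(0,T)$, which matches the initial value, and positivity of $Y_t$ is clear since it is an exponential times the deterministic positive constant $P(0,T)$.

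Next I would compute $dY_t$ by applying the It\^o formula for semimartingales with jumps to $Y_t = f(Z_t)$ with $f(x) = P(0,T)e^x$. The exponent $Z_t$ has a finite-variation (drift) part given by $\int_0^t r_s\,ds - \sum_k \int_0^t\int_{D_k}\zeta_k(s,T,z)\,d\nu_k(z)\,ds$ and a pure-jump part $\sum_k \int_0^t\int_{D_k}\sigma_k B_k(s,T)z\,dN_k(s,z)$. Since the jump part is driven by compound Poisson measures $N_k$, the jump times are discrete and at a jump of $N_k$ at $(t,z)$ the exponent jumps by $\Delta Z_t = \sigma_k B_k(t,T)z$, so $Y_t$ jumps by $Y_{t-}\bigl(e^{\sigma_k B_k(t,T)z}-1\bigr) = Y_{t-}\,\zeta_k(t,T,z)$ using the definition \eqref{eq3.10}. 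Between jumps $Y_t$ evolves according to the drift of $Z_t$, contributing $Y_{t-}\bigl(r_t - \sum_k\int_{D_k}\zeta_k(t,T,z)\,d\nu_k(z)\bigr)dt$. Collecting these, I get
\begin{equation*}
dY_t = Y_{t-}\,r_t\,dt - Y_{t-}\sum_{k=1}^n\int_{D_k}\zeta_k(t,T,z)\,d\nu_k(z)\,dt + Y_{t-}\sum_{k=1}^n\int_{D_k}\zeta_k(t,T,z)\,dN_k(t,z).
\end{equation*}
Recognizing $dN_k(t,z) = d\tilde N_k^{\mathbb{Q}}(t,z) + d\nu_k(z)\,dt$ from \eqref{eq2.4}, the two compensator terms cancel and I am left with $dY_t = Y_{t-}\,r_t\,dt + Y_{t-}\sum_k\int_{D_k}\zeta_k(t,T,z)\,d\tilde N_k^{\mathbb{Q}}(t,z)$, which is exactly \eqref{eq3.11} with $Y$ in place of $P(\cdot,T)$.

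Finally I would invoke uniqueness: equation \eqref{eq3.11} is a linear (Dol\'eans-Dade exponential type) SDE driven by the finitely many compensated compound Poisson integrators and the locally bounded coefficient $r_t$, so it admits a unique c\`adl\`ag solution given the initial value $P(0,T)$; since $Y$ and $P(\cdot,T)$ both solve it with the same initial value, they coincide. The main obstacle — though it is more a bookkeeping matter than a deep one — is handling the It\^o formula carefully in the jump setting: one must be sure that for a pure-jump driver there is no extra quadratic-variation correction term (the continuous martingale part is absent), and that the $\nu_k$-integrability conditions \eqref{eq2.5} together with the sign bound $\sigma_k B_k(t,T)z\le 0$ guarantee that all the $d\nu_k\,ds$ integrals of $\zeta_k$ are finite so that the drift rearrangement via \eqref{eq2.4} is legitimate. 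Alternatively, one can bypass It\^o entirely by differentiating the explicit affine bond price formula \eqref{eq3.6} using Proposition~\ref{prop3.2}, but the stochastic-exponential verification above is the cleanest route.
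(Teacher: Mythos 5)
Your verification argument is correct: the Itô computation for the finite-activity jump exponent, the identification of the jump size $Y_{t-}\zeta_k(t,T,z)$, the compensation via \eqref{eq2.4}, and the appeal to uniqueness for the linear (Dol\'eans-Dade) SDE \eqref{eq3.11} are all sound, and the integrability of $\zeta_k$ against $d\nu_k\,ds$ indeed follows from \eqref{eq2.5} together with $\sigma_k B_k(t,T)z\le 0$. The paper omits this proof as straightforward, and your stochastic-exponential verification is exactly the intended standard route, consistent with the paper's own reformulation of \eqref{eq3.12} in Corollary~\ref{cor3.4}.
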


Furthermore, for all $t\in \left [ 0,T \right ]$ let us introduce the
discontinuous Dol\'{e}ans-Dade exponential
\begin{align}
\label{eq3.13}
\Xi _{t}^{k}& := \mathcal{E}  \bigl( h_{k} *
\tilde{N}_{k}^{\mathbb{Q}}  \bigr)_{t} := \exp \Biggl\{
\int _{0}^{t} \int _{D_{k}} h_{k} \left ( s,z \right ) d
\tilde{N}_{k}^{\mathbb{Q}} \left ( s,z \right )
\nonumber\\
&\quad  - \int _{0}^{t} \int _{D_{k}}
 \bigl[ e^{h_{k} \left ( s,z \right )} -1- h_{k} \left ( s,z \right )  \bigr] d
\nu _{k} \left ( z \right ) ds \Biggr\}
\end{align}
where $h_{k} \left ( s,z \right )$ is an arbitrary integrable deterministic
function (which may also depend on $T$). We recall that $\Xi _{0}^{k} =1$
and that $\left ( \Xi _{t}^{k} \right )_{t\in \left [ 0,T \right ]}$ constitutes
a local $\mathbb{Q}$-martingale. With definition~(\ref{eq3.13}) at hand, we can
express Eq.~(\ref{eq3.12}) as follows.

\begin{Corollary}\label{cor3.4} {For all} $0\leq t\leq T$ {the bond
price satisfies}
\begin{equation}
\label{eq3.14}
P \left ( t,T \right ) =P \left ( 0,T \right ) \beta _{t} \prod _{k=1}^{n}
\mathcal{E}  \bigl( \xi _{k} * \tilde{N}_{k}^{\mathbb{Q}}  \bigr)_{t}
\end{equation}
{where} $\beta _{t}$ {is the bank account process given in}
(\ref{eq3.2}), $\mathcal{E}$ {denotes the Dole\'{a}ns-Dade exponential
defined in} (\ref{eq3.13}) {and} $\xi _{k} \left ( s,z \right ) :=
\sigma _{k} \ B_{k} \left ( s,T \right ) \ z= \log \left ( 1+ \zeta _{k} \left (
s,T,z \right ) \right )$ {is a deterministic function}.
\end{Corollary}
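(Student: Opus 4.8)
The plan is to start from the explicit solution~(\ref{eq3.12}) furnished by Proposition~\ref{prop3.3}, to peel off the bank account factor, and then to recognize each remaining $k$-th summand as a Dol\'{e}ans-Dade exponential of the form~(\ref{eq3.13}). First I would invoke~(\ref{eq3.2}) to rewrite $\exp\{\int_0^t r_s\, ds\} = \beta_t$ in~(\ref{eq3.12}) and split the exponential of the finite sum over $k$ into a product (this last step being purely algebraic, the independence of $L^1,\dots,L^n$ having already been used in deriving~(\ref{eq3.12})); this turns~(\ref{eq3.12}) into
\[
P(t,T) = P(0,T)\,\beta_t \prod_{k=1}^n \exp\left\{ \int_0^t \int_{D_k} \sigma_k B_k(s,T)\, z\, dN_k(s,z) - \int_0^t \int_{D_k} \zeta_k(s,T,z)\, d\nu_k(z)\, ds \right\}.
\]
It then remains to identify, for each fixed $k$, the $k$-th factor on the right with $\mathcal{E}(\xi_k * \tilde{N}_k^{\mathbb{Q}})_t$.

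To do so I would set $h_k := \xi_k$ in~(\ref{eq3.13}) and expand the compensated integral via~(\ref{eq2.4}):
\[
\int_0^t \int_{D_k} \xi_k(s,z)\, d\tilde{N}_k^{\mathbb{Q}}(s,z) = \int_0^t \int_{D_k} \xi_k(s,z)\, dN_k(s,z) - \int_0^t \int_{D_k} \xi_k(s,z)\, d\nu_k(z)\, ds.
\]
Substituting this into~(\ref{eq3.13}) and collecting the two $d\nu_k(z)\,ds$-integrals, the linear terms $\pm\xi_k$ cancel and one is left with the compensator $-\int_0^t\int_{D_k}[e^{\xi_k(s,z)}-1]\,d\nu_k(z)\,ds$. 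Since $\xi_k(s,z) = \sigma_k B_k(s,T)\,z$ by definition and hence, by~(\ref{eq3.10}), $e^{\xi_k(s,z)}-1 = e^{\sigma_k B_k(s,T) z}-1 = \zeta_k(s,T,z)$ (equivalently $\xi_k = \log(1+\zeta_k)$, the logarithmic identity asserted in the Corollary), the right-hand side of~(\ref{eq3.13}) collapses exactly to the $k$-th factor displayed above. Multiplying these $n$ identities and reinstating the prefactor $P(0,T)\beta_t$ gives~(\ref{eq3.14}).

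I do not expect a genuine obstacle here: once~(\ref{eq3.12}) is at hand the argument is deterministic bookkeeping. The only points deserving a moment's care are (i) that the quadratic-type correction $e^{\xi_k}-1-\xi_k$ built into~(\ref{eq3.13}) produces precisely the $-\xi_k$ term needed to annihilate the one generated when passing from $\tilde{N}_k^{\mathbb{Q}}$ to $N_k$, and (ii) well-definedness of $\mathcal{E}(\xi_k * \tilde{N}_k^{\mathbb{Q}})_t$, i.e. integrability of $\xi_k$ and of $e^{\xi_k}-1$ against $\nu_k(dz)\,ds$; the latter follows from $B_k(\boldsymbol{\cdot},T)\leq 0$ — so that $0 \leq -\xi_k(s,z) \leq (\sigma_k/\lambda_k)\,z$ and $|e^{\xi_k(s,z)}-1| \leq -\xi_k(s,z)$ — together with the L\'{e}vy integrability conditions~(\ref{eq2.5}).
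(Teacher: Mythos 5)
Your proof is correct and follows exactly the route the paper intends (the paper omits this proof as straightforward): substituting $\beta_t$ from (\ref{eq3.2}) into (\ref{eq3.12}), factorizing the exponential over $k$, and checking via (\ref{eq2.4}) and (\ref{eq3.10}) that each factor equals $\mathcal{E}\bigl(\xi_k * \tilde{N}_k^{\mathbb{Q}}\bigr)_t$ with $e^{\xi_k}-1=\zeta_k$. Your added remarks on the cancellation of the $\pm\xi_k$ terms and on integrability via $B_k(\boldsymbol{\cdot},T)\leq 0$ and (\ref{eq2.5}) are accurate but not needed beyond what the paper already assumes.
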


Moreover, for all $0\leq t\leq T$ we define the discounted bond
price\index{discounted bond price}
\begin{equation}
\label{eq3.15}
\hat{P} \left ( t,T \right ) := \frac{P \left ( t,T \right )}{\beta _{t}}
\end{equation}
where $\hat{P} \left ( 0,T \right ) =P \left ( 0,T \right )$. From (\ref{eq3.3}) we
deduce $\hat{P} \left ( t,T \right ) = \mathbb{E}_{\mathbb{Q}}  (
\beta _{T}^{-1} | \ \mathcal{F}_{t}  )$ such that $\hat{P} \left ( t,T
\right )$ constitutes an $\mathcal{F}_{t}$-adapted (true) martingale under
$\mathbb{Q}$, as required by the risk-neutral pricing theory. Plugging
(\ref{eq3.14}) into (\ref{eq3.15}), for all $t\in \left [ 0,T \right ]$ we obtain
\begin{equation}
\label{eq3.16}
\hat{P} \left ( t,T \right ) =P \left ( 0,T \right ) \prod _{k=1}^{n}
\mathcal{E}  \bigl( \xi _{k} * \tilde{N}_{k}^{\mathbb{Q}}  \bigr)_{t}\vadjust{\eject}
\end{equation}
where $P \left ( 0,T \right )$ is deterministic and $\xi _{k}$ is such as
defined in Corollary~\ref{cor3.4}. We obtain the following result.

\begingroup
\abovedisplayskip=4pt\belowdisplayskip=4pt
\begin{Proposition}\label{prop3.5} {For all} $t\in \left [ 0,T \right ]$
{the discounted bond price\index{discounted bond price} satisfies the}
$\mathbb{Q}$-mar\-tingale dynamics
\begin{equation*}
\frac{d \hat{P} \left ( t,T \right )}{\hat{P} \left ( t-,T \right )} =
\sum _{k=1}^{n} \int _{D_{k}} \zeta _{k} \left ( t,T,z \right ) d
\tilde{N}_{k}^{\mathbb{Q}} \left ( t,z \right )
\end{equation*}
{where the deterministic functions} $\zeta _{k} \left ( t,T,z \right )$
{are such as defined in} (\ref{eq3.10}).
\end{Proposition}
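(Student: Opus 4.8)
The plan is to read off the dynamics of $\hat{P}(\cdot,T)$ from the factorization $\hat{P}(t,T) = P(t,T)\,\beta_t^{-1}$ in (\ref{eq3.15}), combining the bond-price dynamics of Proposition~\ref{prop3.2} with the elementary dynamics of the bank account in (\ref{eq3.1}) via the integration-by-parts formula for semimartingales.

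First I would note that, since $\beta_t$ solves the pathwise equation $d\beta_t = r_t\beta_t\,dt$ with $\beta_0 = 1$, the process $t\mapsto \beta_t^{-1}$ is continuous, strictly positive, of finite variation, and satisfies $d\bigl(\beta_t^{-1}\bigr) = -r_t\beta_t^{-1}\,dt$; in particular it has no martingale component. Hence in the product rule applied to the semimartingales $P(\cdot,T)$ and $\beta_\cdot^{-1}$ the quadratic-covariation term vanishes, so that $d\hat{P}(t,T) = \beta_t^{-1}\,dP(t,T) + P(t-,T)\,d\bigl(\beta_t^{-1}\bigr)$. Substituting Proposition~\ref{prop3.2}, i.e. $dP(t,T) = P(t-,T)\bigl(r_t\,dt + \sum_{k=1}^n\int_{D_k}\zeta_k(t,T,z)\,d\tilde{N}_k^{\mathbb{Q}}(t,z)\bigr)$, together with $d\bigl(\beta_t^{-1}\bigr) = -r_t\beta_t^{-1}\,dt$, the two $r_t\,dt$ contributions cancel and, using $\hat{P}(t-,T) = P(t-,T)/\beta_t$ (the denominator being continuous), one arrives at $d\hat{P}(t,T) = \hat{P}(t-,T)\sum_{k=1}^n\int_{D_k}\zeta_k(t,T,z)\,d\tilde{N}_k^{\mathbb{Q}}(t,z)$. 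Dividing by $\hat{P}(t-,T)$, which is strictly positive $\mathbb{Q}$-a.s.\ by Proposition~\ref{prop3.3}, gives the claimed expression, with the functions $\zeta_k$ as in (\ref{eq3.10}).

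An equivalent route --- useful as a cross-check --- is to differentiate the product representation (\ref{eq3.16}): each Dol\'{e}ans--Dade factor $\Xi_t^k = \mathcal{E}(\xi_k * \tilde{N}_k^{\mathbb{Q}})_t$ from (\ref{eq3.13}) solves $d\Xi_t^k = \Xi_{t-}^k\int_{D_k}\bigl(e^{\xi_k(t,z)} - 1\bigr)\,d\tilde{N}_k^{\mathbb{Q}}(t,z)$, and $e^{\xi_k} - 1 = \zeta_k$ by Corollary~\ref{cor3.4}; applying the product rule to $\prod_{k=1}^n\Xi_t^k$ and using that the independent PRMs $N_k$ never jump simultaneously (so all cross-covariation terms vanish) reproduces the same dynamics. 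Finally, the $\mathbb{Q}$-martingale claim is immediate from the form obtained: the right-hand side is a sum of integrals against the compensated measures $\tilde{N}_k^{\mathbb{Q}}$, so $\hat{P}(\cdot,T)$ has zero drift and is at least a local $\mathbb{Q}$-martingale; that it is a genuine martingale has already been recorded (it equals $\mathbb{E}_{\mathbb{Q}}(\beta_T^{-1}\,|\,\mathcal{F}_t)$), and can also be reconfirmed from the integrability conditions (\ref{eq2.5}), which guarantee that $\zeta_k$ and its compensator are integrable on $[0,T]\times D_k$.

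The only delicate point I expect is the justification that the covariation term in the integration by parts drops out --- i.e.\ that $\beta_\cdot$, being absolutely continuous hence continuous and of finite variation, has vanishing bracket with the purely discontinuous martingale part of $P(\cdot,T)$ --- together with, in the alternative route, the analogous fact that distinct driving PRMs almost surely never jump at the same time so that the multi-factor product rule produces no cross terms. Everything else is routine bookkeeping with the formulas already derived above.
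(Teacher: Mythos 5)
Your proof is correct: the integration-by-parts argument with the continuous finite-variation process $\beta_t^{-1}$ (cancelling the $r_t\,dt$ drift from Proposition~\ref{prop3.2}), or equivalently differentiating the product of Dol\'{e}ans--Dade exponentials in (\ref{eq3.16}), is exactly the routine computation the paper has in mind when it omits this proof as straightforward, and your handling of the vanishing covariation term, the positivity of $\hat{P}(t-,T)$, and the true-martingale property via $\hat{P}(t,T)=\mathbb{E}_{\mathbb{Q}}(\beta_T^{-1}\,|\,\mathcal{F}_t)$ is all sound.
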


With reference to \cite{7}, we define the instantaneous forward rate\index{instantaneous forward rate} at time $t$
with maturity $T$ via
\begin{equation}
\label{eq3.17}
f \left ( t,T \right ) :=- \partial _{T} \log P \left ( t,T \right )
\end{equation}
where $t\in \left [ 0,T \right ]$ and $\partial _{T}$ denotes the differential
operator with respect to $T$. Equation (\ref{eq3.17}) is equivalent to the
representation
\begin{equation}
\label{eq3.18}
P \left ( t,T \right ) = \exp \left \{  - \int _{t}^{T} f \left ( t,u \right ) du
\right \}  .
\end{equation}

\begin{Lemma}\label{lem3.6} {For all} $k\in \left \{  1,\dots ,n \right \}  $
{and} $t\in \left [ 0,T \right ]$ {it holds}
\begin{align*}
\partial _{T} A_{k} \left ( t,T \right ) &=- \frac{\mu \left ( T \right )}{n} -
\sigma _{k} \int _{t}^{T} \int _{D_{k}} z e^{\sigma _{k} B_{k} \left ( s,T
\right ) z- \lambda _{k} \left ( T-s \right )} d \nu _{k} \left ( z \right ) ds, \\\
\partial _{T} B_{k} \left ( t,T \right )
&=- e^{- \lambda _{k} \left ( T-t
\right )}.
\end{align*}
\end{Lemma}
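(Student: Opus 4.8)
The plan is to differentiate the two deterministic functions defined in \eqref{eq3.5} directly with respect to the maturity variable $T$, treating $t$ as fixed. Both $A_k(t,T)$ and $B_k(t,T)$ are given by explicit closed-form expressions, so the computation is a matter of careful application of standard calculus rules — in particular the Leibniz integral rule for $A_k$, since $T$ appears both as the upper limit of integration and inside the integrand through $B_k(s,T)$.

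First I would handle $B_k(t,T)=\bigl(e^{-\lambda_k(T-t)}-1\bigr)/\lambda_k$, which is immediate: $\partial_T B_k(t,T)=\tfrac{1}{\lambda_k}\cdot(-\lambda_k)e^{-\lambda_k(T-t)}=-e^{-\lambda_k(T-t)}$, matching the claimed formula. For $A_k$, recall
\[
A_k(t,T)=\int_t^T\left(-\frac{\mu(s)}{n}+\int_{D_k}\bigl[e^{\sigma_k B_k(s,T)z}-1\bigr]d\nu_k(z)\right)ds.
\]
Applying the Leibniz rule, $\partial_T A_k(t,T)$ splits into two contributions: the boundary term obtained by evaluating the integrand at $s=T$, and the integral over $s\in[t,T]$ of the $T$-derivative of the integrand. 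For the boundary term, note that $B_k(T,T)=0$ by \eqref{eq3.7}, so $e^{\sigma_k B_k(T,T)z}-1=0$ and the inner $\nu_k$-integral vanishes; the boundary contribution is therefore just $-\mu(T)/n$. For the remaining term I would differentiate $e^{\sigma_k B_k(s,T)z}-1$ under the $\nu_k$- and $ds$-integrals, using $\partial_T B_k(s,T)=-e^{-\lambda_k(T-s)}$ from the first part; the chain rule gives $\partial_T\bigl(e^{\sigma_k B_k(s,T)z}\bigr)=\sigma_k z\,e^{\sigma_k B_k(s,T)z}\cdot\bigl(-e^{-\lambda_k(T-s)}\bigr)$, which assembles into exactly the stated double integral $-\sigma_k\int_t^T\int_{D_k}z\,e^{\sigma_k B_k(s,T)z-\lambda_k(T-s)}d\nu_k(z)\,ds$.

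The main obstacle — really the only non-cosmetic point — is justifying the interchange of the $\partial_T$ operator with the $\nu_k$-integral (and with the $ds$-integral), i.e.\ differentiation under the integral sign. This requires a dominated-convergence argument: one needs a $\nu_k$-integrable bound, uniform in $T$ over a neighbourhood, for the difference quotients of $z\mapsto e^{\sigma_k B_k(s,T)z}$. Here the key facts are that $B_k(s,T)\le 0$ for all relevant $s,T$, so that $e^{\sigma_k B_k(s,T)z}\le 1$, and that the derivative involves the extra factor $z$, which is controlled near $0$ and at $\infty$ precisely by the integrability conditions \eqref{eq2.5} (namely $\int_{D_k}(1\wedge z)\,d\nu_k(z)<\infty$ together with the exponential-moment condition, with $\varpi$ chosen appropriately). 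Since $\mu$ is assumed differentiable and $\mathcal{L}^1$, the $\mu(s)/n$ piece causes no trouble. Once the differentiation-under-the-integral step is licensed, the result follows by collecting the two terms, and I would remark that the boundary term is what produces the $-\mu(T)/n$ summand appearing in $\partial_T A_k$.
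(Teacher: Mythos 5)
Your proposal is correct and follows essentially the same route as the paper: compute $\partial_T B_k$ directly, then obtain $\partial_T A_k$ via the Leibniz rule combined with differentiation under the integral sign, with the boundary contribution killed (apart from $-\mu(T)/n$) by $B_k(T,T)=0$. The paper merely organizes the interchange slightly differently (Fubini to swap the $\nu_k$- and $ds$-integrals before applying Leibniz in $s$), and your dominated-convergence justification via $e^{\sigma_k B_k(s,T)z}\le 1$ and \eqref{eq2.5} is a perfectly adequate substitute for the paper's brief square-integrability remark.
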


\begin{proof} By the definition of $B_{k} \left ( t,T \right )$ claimed in
(\ref{eq3.5}) we find
\begin{equation}
\label{eq3.19}
\partial _{T} B_{k} \left ( t,T \right ) =- e^{- \lambda _{k} \left ( T-t
\right )}
\end{equation}
so that the functions $B_{k} \left ( t,T \right )$ are strictly monotone
decreasing in $T$. From (\ref{eq3.5}) and (\ref{eq3.10}) we further deduce
\begin{equation*}
\partial _{T} A_{k} \left ( t,T \right ) =- \frac{\mu \left ( T \right )}{n} +
\partial _{T} \left ( \int _{t}^{T} \int _{D_{k}} \zeta _{k} \left ( s,T,z
\right ) d \nu _{k} \left ( z \right ) ds \right )
\end{equation*}
whereas Fubini's theorem (see, e.g., Theorem 2.2 in \cite{3}) leads us to
\begin{equation*}
\partial _{T} \left ( \int _{t}^{T} \int _{D_{k}} \zeta _{k} \left ( s,T,z
\right ) d \nu _{k} \left ( z \right ) ds \right ) = \int _{D_{k}} \partial _{T}
\left ( \int _{t}^{T} \zeta _{k} \left ( s,T,z \right ) ds \right ) d \nu _{k}
\left ( z \right ).
\end{equation*} (We are able to apply Fubini's theorem here, since the deterministic
function $\zeta _{k}  ( s,\!T,\!z  )$ is measurable and
square-integrable with respect to $s\in \left [ 0,T \right ]$ and $z\in
D_{k}$.) The Leibniz formula for parameter integrals (see Lemma 2.4.1 on p.
13 in \cite{28}) yields
\begin{align*}
\partial _{T} \left ( \int _{t}^{T} \zeta _{k} \left ( s,T,z \right ) ds \right )
&= \zeta _{k} \left ( T,T,z \right ) + \int _{t}^{T} \partial _{T} \zeta _{k}
\left ( s,T,z \right ) ds
\\
&=- \sigma _{k} \int _{t}^{T} z e^{\sigma _{k} B_{k}
\left ( s,T \right ) z- \lambda _{k} \left ( T-s \right )} ds
\end{align*}
where we used (\ref{eq3.10}), (\ref{eq3.7}) and (\ref{eq3.19}). Putting these formulas together,
the proof is complete.
\end{proof}\eject\endgroup

\begin{Proposition}\label{prop3.7} {For all} $t\in \left [ 0,T \right ]$
{the instantaneous forward rate\index{instantaneous forward rate} can be represented as}
\begin{equation}
\label{eq3.20}
f \left ( t,T \right ) =\mu \left ( T \right ) + \sum _{k=1}^{n} \int _{t}^{T}
\int _{D_{k}} \sigma _{k} z e^{\sigma _{k} B_{k} \left ( s,T \right ) z-
\lambda _{k} \left ( T-s \right )} d \nu _{k} \left ( z \right ) ds +
\sum _{k=1}^{n} X_{t}^{k} e^{- \lambda _{k} \left ( T-t \right )}
\end{equation}
{where the factor processes} $X_{t}^{k}$ {satisfy} (\ref{eq2.7})
{and} $B_{k} \left ( s,T \right )$ {is like defined in}
(\ref{eq3.5}).
\end{Proposition}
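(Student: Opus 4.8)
The plan is to compute $f(t,T)$ directly from its definition (\ref{eq3.17}) by differentiating the affine bond price representation of Proposition~\ref{prop3.1}. First I would take the logarithm of (\ref{eq3.6}), which immediately gives
\[
\log P \left ( t,T \right ) = \sum _{k=1}^{n}  \bigl[ A_{k} \left ( t,T \right ) + B_{k} \left ( t,T \right ) X_{t}^{k}  \bigr],
\]
a \emph{finite} sum of deterministic functions plus the $\mathcal{F}_{t}$-measurable factors $X_{t}^{k}$, the latter being independent of the maturity $T$. Since each $A_{k} \left ( \boldsymbol{\cdot },T \right )$ and $B_{k} \left ( \boldsymbol{\cdot },T \right )$ is differentiable in $T$ — the relevant $T$-derivatives having been computed in Lemma~\ref{lem3.6} — differentiation commutes with the finite sum and we obtain
\[
f \left ( t,T \right ) =- \partial _{T} \log P \left ( t,T \right ) =- \sum _{k=1}^{n}  \bigl[ \partial _{T} A_{k} \left ( t,T \right ) + X_{t}^{k}\, \partial _{T} B_{k} \left ( t,T \right )  \bigr].
\]

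Next I would substitute the two identities of Lemma~\ref{lem3.6}, namely $\partial _{T} B_{k} \left ( t,T \right ) =- e^{- \lambda _{k} \left ( T-t \right )}$ and $\partial _{T} A_{k} \left ( t,T \right ) =- \mu \left ( T \right )/n - \sigma _{k} \int _{t}^{T}\!\int _{D_{k}} z\, e^{\sigma _{k} B_{k} \left ( s,T \right ) z- \lambda _{k} \left ( T-s \right )}\, d \nu _{k} \left ( z \right )\, ds$. Summing the first contribution over $k$ collapses $\sum _{k=1}^{n} \mu \left ( T \right )/n$ into $\mu \left ( T \right )$, the double-integral terms reassemble into the middle summand of (\ref{eq3.20}), and the $X_{t}^{k}$ terms acquire the factor $e^{- \lambda _{k} \left ( T-t \right )}$ with the correct (positive) sign. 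This yields precisely the asserted representation (\ref{eq3.20}), with $X_{t}^{k}$ given by (\ref{eq2.7}) as in Proposition~\ref{prop3.1}. As a sanity check one may verify that $\int _{t}^{T} f \left ( t,u \right ) du$ reproduces $- \log P \left ( t,T \right )$ via (\ref{eq3.18}), but this is not needed for the proof.

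I do not expect a genuine obstacle here: once Lemma~\ref{lem3.6} is available, the argument is essentially a one-line substitution into $-\partial _{T}\log P \left ( t,T \right )$, and the only point requiring care is sign bookkeeping — the minus sign in (\ref{eq3.17}) cancels the minus signs carried by each $T$-derivative — together with the trivial remarks that the sum over $k$ is finite and that $X_{t}^{k}$ does not depend on $T$. All integrability needed for the expressions to be well defined is inherited from (\ref{eq2.5}) and was already exploited in Lemma~\ref{lem3.6} to justify the Fubini and Leibniz steps, so nothing new is required on that count.
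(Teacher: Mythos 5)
Your proposal is correct and follows essentially the same route as the paper: the paper's proof also substitutes the affine representation (\ref{eq3.8}) into the definition (\ref{eq3.17}) to get $f \left ( t,T \right ) =- \sum _{k=1}^{n}  \bigl[ \partial _{T} A_{k} \left ( t,T \right ) + X_{t}^{k} \partial _{T} B_{k} \left ( t,T \right )  \bigr]$ and then invokes Lemma~\ref{lem3.6}. Your additional remarks on sign bookkeeping and the collapse of $\sum _{k=1}^{n} \mu \left ( T \right ) / n$ are exactly the computations the paper leaves implicit.
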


\begin{proof} We substitute (\ref{eq3.8}) into (\ref{eq3.17}) and obtain
\begin{equation*}
f \left ( t,T \right ) =- \sum _{k=1}^{n}  \bigl[ \partial _{T} A_{k} \left ( t,T
\right ) + X_{t}^{k} \partial _{T} B_{k} \left ( t,T \right )  \bigr].
\end{equation*}Combining this equality with Lemma~\ref{lem3.6}, we derive the claimed
representation (\ref{eq3.20}).
\end{proof}

Replacing $T$ by $t$ in (\ref{eq3.20}), we immediately find $f \left ( t,t \right ) =
r_{t}$ due to (\ref{eq2.1}). This equality stands in line with the usual
conventions in interest rate\index{interest rate} theory (see, e.g., \cite{7,16,21}).

\begin{Proposition}\label{prop3.8} {For all} $t\in \left [ 0,T \right ]$
{the instantaneous forward rate\index{instantaneous forward rate} fulfills the pure-jump multi-factor
HJM type equation}
\begin{equation}
\label{eq3.21}
f \left ( t,T \right ) =f \left ( 0,T \right ) + \sum _{k=1}^{n} \int _{0}^{t}
\int _{D_{k}} \sigma _{k} z e^{- \lambda _{k} \left ( T-s \right )}  \bigl\{  d
N_{k} \left ( s,z \right ) - e^{\sigma _{k} B_{k} \left ( s,T \right ) z} d
\nu _{k} \left ( z \right ) ds \bigr\}
\end{equation}
{where the deterministic initial value is given by} $f \left ( 0,T
\right ) =- \partial _{T} \log P \left ( 0,T \right )$.
\end{Proposition}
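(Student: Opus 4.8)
The plan is to start from the explicit representation (\ref{eq3.20}) of $f(t,T)$ obtained in Proposition~\ref{prop3.7} and to rewrite the increment $f(t,T)-f(0,T)$ in the claimed integral form. First I would evaluate (\ref{eq3.20}) at $t=0$, using $X_0^k=x_k$, to get
\begin{equation*}
f(0,T) = \mu(T) + \sum_{k=1}^n \int_0^T \int_{D_k} \sigma_k z e^{\sigma_k B_k(s,T)z-\lambda_k(T-s)} d\nu_k(z)\,ds + \sum_{k=1}^n x_k e^{-\lambda_k T},
\end{equation*}
which also coincides with $-\partial_T\log P(0,T)$ by (\ref{eq3.17}) taken at $t=0$, so that the stated deterministic initial condition holds.

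Next I would subtract $f(0,T)$ from $f(t,T)$. The $\mu(T)$ terms cancel, so no drift in $T$ survives. The deterministic double-integral terms combine through $\int_t^T-\int_0^T=-\int_0^t$, producing
\begin{equation*}
-\sum_{k=1}^n \int_0^t \int_{D_k} \sigma_k z e^{\sigma_k B_k(s,T)z-\lambda_k(T-s)} d\nu_k(z)\,ds,
\end{equation*}
which is exactly the compensator part of (\ref{eq3.21}). It then remains to treat the stochastic contribution $\sum_{k=1}^n\bigl[X_t^k e^{-\lambda_k(T-t)}-x_k e^{-\lambda_k T}\bigr]$. Here I would insert the solution formula (\ref{eq2.7}) for $X_t^k$; after multiplying by $e^{-\lambda_k(T-t)}$ the deterministic piece collapses via $x_k e^{-\lambda_k t}e^{-\lambda_k(T-t)}=x_k e^{-\lambda_k T}$ and cancels, leaving the pure-jump term $\sigma_k \int_0^t \int_{D_k} e^{-\lambda_k(T-s)} z\, dN_k(s,z)$, which is the jump part of (\ref{eq3.21}).

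Adding the two contributions yields (\ref{eq3.21}) directly. There is essentially no analytic obstacle: the only point that needs a little care is the bookkeeping of the exponential factors, in particular the identity $e^{-\lambda_k(t-s)}e^{-\lambda_k(T-t)}=e^{-\lambda_k(T-s)}$ used when substituting (\ref{eq2.7}), and observing that the $B_k(s,T)$-dependence enters only through the compensator $e^{\sigma_k B_k(s,T)z}d\nu_k(z)ds$ and not through the jump integrand. One could alternatively derive (\ref{eq3.21}) by differentiating the bond-price dynamics of Proposition~\ref{prop3.2} in $T$, but the direct route through (\ref{eq3.20}) and (\ref{eq2.7}) is the cleanest.
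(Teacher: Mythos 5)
Your proof is correct, and it follows exactly the route the paper intends (the proof of Proposition~\ref{prop3.8} is omitted there as straightforward): evaluate (\ref{eq3.20}) at $t=0$, subtract, use $\int_t^T-\int_0^T=-\int_0^t$ for the compensator part, and insert (\ref{eq2.7}) with the factor $e^{-\lambda _{k} \left ( t-s \right )} e^{-\lambda _{k} \left ( T-t \right )} = e^{-\lambda _{k} \left ( T-s \right )}$ to isolate the jump integral. The bookkeeping, including the identification $f\left ( 0,T \right ) = -\partial _{T} \log P\left ( 0,T \right )$ from (\ref{eq3.17}), is all in order.
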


In what follows, we illustrate how our forward rate\index{forward rate} model can be fitted to
the initially observed term structure.\index{term structure} This procedure is often called
\textit{market-consistent calibration} in the literature. For this purpose,
we denote by $f^{M} \left ( 0,T \right )$ the deterministic initial forward
rate. If $f \left ( 0,T \right ) = f^{M} \left ( 0,T \right )$ and hence, if $P
\left ( 0,T \right ) = P^{M} \left ( 0,T \right )$ for all maturity times
$T>0$, then the underlying model is called \textit{market-consistent}.

\begin{Proposition}\label{prop3.9} {The forward rate\index{forward rate} model}
(\ref{eq3.20})--(\ref{eq3.21}) {can be market-consistently calibrated to a
given term structure} $f^{M} \left ( 0,T \right )$ {by choosing the
floor function} $\mu \left ( \boldsymbol{\cdot } \right )\ in$ (\ref{eq3.20})
{according to}
\begin{equation}
\label{eq3.22}
\mu \left ( T \right ) = f^{M} \left ( 0,T \right ) - \sum _{k=1}^{n} \left (
x_{k} e^{- \lambda _{k} T} + \int _{0}^{T} \int _{D_{k}} \sigma _{k} z
e^{\sigma _{k} B_{k} \left ( s,T \right ) z- \lambda _{k} \left ( T-s \right )} d
\nu _{k} \left ( z \right ) ds \right )
\end{equation}
{for all maturity times} $T>0$.
\end{Proposition}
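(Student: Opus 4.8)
The plan is to invert the forward rate representation obtained in Proposition~\ref{prop3.7}. Note first that (\ref{eq3.20}) and (\ref{eq3.21}) describe one and the same process $f\left(t,T\right)$ — the former comes from the bond price (\ref{eq3.8}) via (\ref{eq3.17}), the latter is its HJM reformulation — so it suffices to work with (\ref{eq3.20}).

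First I would set $t=0$ in (\ref{eq3.20}). By (\ref{eq2.7}) (equivalently, by the initial condition in (\ref{eq2.2})) one has $X_{0}^{k} = x_{k}$, hence
\begin{equation*}
f\left(0,T\right) = \mu\left(T\right) + \sum_{k=1}^{n} \int_{0}^{T} \int_{D_{k}} \sigma_{k} z\, e^{\sigma_{k} B_{k}\left(s,T\right) z - \lambda_{k}\left(T-s\right)} d\nu_{k}\left(z\right) ds + \sum_{k=1}^{n} x_{k} e^{-\lambda_{k} T}.
\end{equation*}
The key observation is that the function $B_{k}\left(s,T\right) = \left(e^{-\lambda_{k}\left(T-s\right)}-1\right)/\lambda_{k}$ from (\ref{eq3.5}) does not involve $\mu$, and neither do $x_{k}$, $\lambda_{k}$, $\sigma_{k}$ or $\nu_{k}$; consequently the right-hand side above depends on the floor function only through the single additive term $\mu\left(T\right)$. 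Imposing the market-consistency requirement $f\left(0,T\right) = f^{M}\left(0,T\right)$ and solving the resulting identity for $\mu\left(T\right)$ therefore produces exactly (\ref{eq3.22}), and this solution is unique.

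Conversely, if $\mu$ is chosen according to (\ref{eq3.22}), then the same computation gives $f\left(0,T\right) = f^{M}\left(0,T\right)$ for every maturity $T>0$, and (\ref{eq3.18}) yields
\begin{equation*}
P\left(0,T\right) = \exp\left\{-\int_{0}^{T} f\left(0,u\right) du\right\} = \exp\left\{-\int_{0}^{T} f^{M}\left(0,u\right) du\right\} = P^{M}\left(0,T\right)
\end{equation*}
for all $T>0$, so the model is market-consistent in the sense defined above; in particular (\ref{eq3.21}) is automatically satisfied with deterministic initial value $f\left(0,T\right) = f^{M}\left(0,T\right)$.

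The inversion itself carries no real difficulty; the only point I would spell out with some care is the admissibility of the resulting $\mu$, namely that the right-hand side of (\ref{eq3.22}) is a differentiable $\mathcal{L}^{1}$-function of $T$, as required in Section~\ref{sec2}. Differentiability follows from that of $T\mapsto f^{M}\left(0,T\right)$ together with the Leibniz rule for parameter integrals (used already in the proof of Lemma~\ref{lem3.6}), while integrability in $T$ and the interchange of $\int_{0}^{T} ds$ with $\int_{D_{k}} d\nu_{k}$ are guaranteed by (\ref{eq2.5}): since $\sigma_{k} B_{k}\left(s,T\right) z \leq 0$ one has $e^{\sigma_{k} B_{k}\left(s,T\right) z} \leq 1$, so each inner integral is dominated by $\int_{D_{k}} z\, d\nu_{k}\left(z\right) < \infty$. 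Everything else is immediate.
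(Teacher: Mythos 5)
Your proposal is correct and follows exactly the route the paper intends (its proof is omitted as straightforward): evaluate (\ref{eq3.20}) at $t=0$ with $X_{0}^{k}=x_{k}$, impose $f\left(0,T\right)=f^{M}\left(0,T\right)$, and solve the resulting identity for $\mu\left(T\right)$, which is isolated as a single additive term. Your additional remarks on the converse direction and on the admissibility of $\mu$ (differentiability and integrability via (\ref{eq2.5})) are correct but go beyond what the paper requires.
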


Note that the floor function $\mu \left ( t \right )$ for all $t\in \left [
0,T \right ]$ can be obtained from (\ref{eq3.22}) by replacing $T$ by $t$ therein.
Moreover, we define the interest rate\index{interest rate} curve at time $t<T$ with maturity $T$
via
\begin{equation}
\label{eq3.23}
R \left ( t,T \right ) := \frac{\log P \left ( t,T \right )}{t-T}.
\end{equation}This object is called continuously-compounded spot rate on p. 60 in \cite{7}. It
obviously holds
\begin{equation}
\label{eq3.24}
P \left ( t,T \right ) = e^{- \left ( T-t \right ) R \left ( t,T \right )}
\end{equation}
where $t\in \left [ 0,T \right [$. Comparing the exponent in (\ref{eq3.24}) with
that in (\ref{eq3.8}), we 
infer
\begin{equation*}
R \left ( t,T \right ) = \frac{1}{t-T} \sum _{k=1}^{n}  \bigl[ A_{k} \left (
t,T \right ) + B_{k} \left ( t,T \right ) X_{t}^{k}  \bigr]
\end{equation*}
where $A_{k}$ and $B_{k}$ are such as defined in (\ref{eq3.5}). Hence, it turns
out that the interest rate\index{interest rate} curve $R \left ( t,T \right )$ can be represented
as a sum of affine functions of the pure-jump OU factors $X_{t}^{1},\dots ,
X_{t}^{n}$. In this sense, our short rate model possesses an \textit{affine
term structure\index{term structure}} (cf. Section 3.2.4 in \cite{7}, or \cite{14,16,21}). The latter
observation entirely stands in line with (\ref{eq3.8}).

\begin{Proposition}\label{prop3.10} {For all} $t\in \left [ 0,T \right [$
{the interest rate\index{interest rate} curve possesses the representation}
\begin{align*}
R \left ( t,T \right ) &= \frac{1}{t-T} \Biggl( \log P \left ( 0,T \right ) +
\int _{0}^{t} r_{s} ds - \sum _{k=1}^{n} \int _{0}^{t} \int _{D_{k}} \zeta _{k}
\left ( s,T,z \right ) d \nu _{k} \left ( z \right ) ds
\\
&\quad  + \sum _{k=1}^{n}
\int _{0}^{t} \int _{D_{k}} \sigma _{k} B_{k} \left ( s,T \right ) z d N_{k}
\left ( s,z \right )  \Biggr)
\end{align*}
{where} $\zeta _{k}$ {and} $B_{k}$ {are such as defined
in} (\ref{eq3.10}) {and} (\ref{eq3.5}), {respectively}.
\end{Proposition}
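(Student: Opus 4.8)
The plan is to read the result straight off the explicit solution of the bond-price SDE established in Proposition~\ref{prop3.3}, so that no new computation is needed. Recall that for $t\in[0,T[$ the interest rate curve is defined in~(\ref{eq3.23}) by $R(t,T)=\log P(t,T)/(t-T)$; this is well posed since $t-T<0$ and $P(t,T)>0$ $\mathbb{Q}$-a.s. by construction [see the comment following~(\ref{eq3.3})], so the logarithm is defined pathwise.

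First I would invoke~(\ref{eq3.12}), which represents $P(t,T)$ as $P(0,T)$ multiplied by the exponential of $\int_0^t r_s\,ds - \sum_{k=1}^n\int_0^t\int_{D_k}\zeta_k(s,T,z)\,d\nu_k(z)\,ds + \sum_{k=1}^n\int_0^t\int_{D_k}\sigma_k B_k(s,T)\,z\,dN_k(s,z)$. Taking logarithms on both sides — legitimate because the right-hand side is strictly positive and $P(0,T)>0$ is deterministic — turns the product into a sum and cancels the exponential, giving $\log P(t,T)=\log P(0,T)+\int_0^t r_s\,ds-\sum_{k=1}^n\int_0^t\int_{D_k}\zeta_k(s,T,z)\,d\nu_k(z)\,ds+\sum_{k=1}^n\int_0^t\int_{D_k}\sigma_k B_k(s,T)\,z\,dN_k(s,z)$. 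Dividing through by $t-T$ and substituting into~(\ref{eq3.23}) yields exactly the asserted representation.

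There is essentially no genuine obstacle here; the statement is a direct corollary of Proposition~\ref{prop3.3}. The only points worth a word of care are that $P(t,T)$ is $\mathbb{Q}$-a.s. strictly positive so the pathwise logarithm makes sense, and that the jump integral $\int_0^t\int_{D_k}\sigma_k B_k(s,T)\,z\,dN_k(s,z)$ is finite, which follows from the integrability conditions~(\ref{eq2.5}) imposed on the Lévy measures $\nu_k$. An alternative route would be to apply the Itô formula for jump processes to $\log P(t,T)$ using the $t$-dynamics from Proposition~\ref{prop3.2} and then integrate from $0$ to $t$, but passing through the already-proved closed form~(\ref{eq3.12}) is the shortest path and I would follow it.
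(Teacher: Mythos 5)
Your proof is correct and is exactly the intended argument: the paper omits a proof of Proposition~\ref{prop3.10} as straightforward, the result being a direct consequence of combining the definition~(\ref{eq3.23}) with the explicit solution~(\ref{eq3.12}) from Proposition~\ref{prop3.3}, precisely as you do. Your remarks on positivity of $P(t,T)$ and finiteness of the jump integrals under~(\ref{eq2.5}) are appropriate and complete the argument.
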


\begin{Proposition}\label{prop3.11} {For all} $t\in \left [ 0,T \right ]$
{the integrated short rate process can be represented
as}
\begin{equation}
\label{eq3.25}
I_{t} := \int _{0}^{t} r_{s} ds = \int _{0}^{t} \mu \left ( s \right )
ds - \sum _{k=1}^{n} x_{k} B_{k} \left ( 0,t \right ) - \sum _{k=1}^{n}
\int _{0}^{t} \int _{D_{k}} \sigma _{k} B_{k} \left ( s,t \right ) z d N_{k}
\left ( s,z \right )
\end{equation}
{where the deterministic functions} $B_{k}$ {are such as
defined in} (\ref{eq3.5}).
\end{Proposition}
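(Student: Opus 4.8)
The plan is to integrate the defining identity $(\ref{eq2.1})$, namely $r_s = \mu(s) + \sum_{k=1}^{n} X_s^k$, over $s \in [0,t]$. By linearity of the Lebesgue integral this gives $I_t = \int_0^t \mu(s)\,ds + \sum_{k=1}^{n} \int_0^t X_s^k\,ds$, so the whole task reduces to computing $\int_0^t X_s^k\,ds$ for a single fixed index $k$ and then recognizing the functions $B_k$ from $(\ref{eq3.5})$.

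First I would insert the explicit solution $(\ref{eq2.7})$, $X_s^k = x_k e^{-\lambda_k s} + \sigma_k \int_0^s \int_{D_k} e^{-\lambda_k(s-u)} z\,dN_k(u,z)$, and split the $ds$-integral into a deterministic and a stochastic contribution. The deterministic one is elementary: $x_k \int_0^t e^{-\lambda_k s}\,ds = x_k \frac{1-e^{-\lambda_k t}}{\lambda_k} = -x_k B_k(0,t)$, since $B_k(0,t) = \frac{e^{-\lambda_k t}-1}{\lambda_k}$ by $(\ref{eq3.5})$.

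For the stochastic contribution $\sigma_k \int_0^t \bigl(\int_0^s \int_{D_k} e^{-\lambda_k(s-u)} z\,dN_k(u,z)\bigr) ds$ the key move is to interchange the outer $ds$-integration with the integration against the Poisson random measure $N_k$. After the swap the inner integral is $\int_u^t e^{-\lambda_k(s-u)}\,ds = \frac{1-e^{-\lambda_k(t-u)}}{\lambda_k} = -B_k(u,t)$, so the stochastic part becomes $-\sigma_k \int_0^t \int_{D_k} B_k(u,t)\,z\,dN_k(u,z)$. Summing over $k$, adding $\int_0^t \mu(s)\,ds$, and renaming the integration variable $u$ to $s$ then reproduces $(\ref{eq3.25})$ exactly; finiteness of every term follows from the integrability hypotheses $(\ref{eq2.5})$, precisely as in Proposition~\ref{prop2.2}.

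The one step that is not purely routine is the interchange of integrals. Since the integrand $(s,u,z) \mapsto e^{-\lambda_k(s-u)} z\,\mathbf{1}_{\{u\le s\}}$ is jointly measurable and nonnegative, and since $L^k$ is a compound Poisson process so that $N_k(\omega;\cdot)$ is $\mathbb{Q}$-a.s. a finite positive measure on $[0,T]\times D_k$, one may apply Tonelli's theorem pathwise in $\omega$. Alternatively, one can invoke the stochastic Fubini theorem for deterministic integrands against a Poisson random measure, whose hypothesis here reduces to $\int_0^t\int_0^s\int_{D_k} e^{-\lambda_k(s-u)} z\,d\nu_k(z)\,du\,ds < \infty$, immediate from $(\ref{eq2.5})$. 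I expect this to be the only delicate point, and even then mainly as bookkeeping — conceptually the identity is just ``integrate $(\ref{eq2.7})$ in time and apply Fubini.''
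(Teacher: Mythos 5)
Your proposal is correct and follows essentially the same route as the paper: substitute (\ref{eq2.1}) and (\ref{eq2.7}) into $I_{t}$, identify $-x_{k}B_{k}\left ( 0,t \right )$ from the deterministic part, and use Fubini's theorem to swap the $ds$-integral with the Poisson random measure integral, yielding $-\sigma _{k}\int _{0}^{t}\int _{D_{k}} B_{k}\left ( u,t \right ) z\, dN_{k}\left ( u,z \right )$. Your pathwise Tonelli justification of the interchange is in fact slightly more explicit than the paper's bare citation of Fubini's theorem.
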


\begin{proof} We substitute (\ref{eq2.1}) and (\ref{eq2.7}) into the definition of
$I_{t}$ and obtain
\begin{equation*}
I_{t} = \int _{0}^{t} \mu \left ( s \right ) ds - \sum _{k=1}^{n} x_{k} B_{k}
\left ( 0,t \right ) + \sum _{k=1}^{n} \int _{0}^{t} \int _{0}^{s} \int _{D_{k}}
\sigma _{k} e^{- \lambda _{k} \left ( s-u \right )} z d N_{k} \left ( u,z
\right ) ds
\end{equation*}
where $B_{k}$ is like defined in (\ref{eq3.5}). An application of Fubini's theorem
(see Theorem 2.2 in \cite{3}) yields
\begin{equation*}
\int _{0}^{t} \int _{0}^{s} \int _{D_{k}} \sigma _{k} e^{- \lambda _{k} \left (
s-u \right )} z d N_{k} \left ( u,z \right ) ds =- \int _{0}^{t} \int _{D_{k}}
\sigma _{k} B_{k} \left ( u,t \right ) z d N_{k} \left ( u,z \right ),
\end{equation*}
so that the proof is complete.
\end{proof}

Recall that the last jump integral in (\ref{eq3.25}) constitutes a so-called
Volterra integral, as the time parameter $t$ appears both inside the
integrand and inside the upper integration bound. Also note that it holds
$I_{t} = \log \beta _{t}$ with $I_{0} =0$ due to (\ref{eq3.2}).

\section{Option pricing}\label{sec4}

In this section, we investigate the evaluation of a plain vanilla option
written on the zero-coupon bond price $P \left ( \boldsymbol{\cdot },T
\right )$. With reference to the risk-neutral pricing theory, the price at
time $t\leq \tau $ of an option with payoff $H_{\tau }$ at the maturity time\index{maturity time}
$\tau $ reads as
\begin{equation}
\label{eq4.1}
C_{t} = \beta _{t} \mathbb{E}_{\mathbb{Q}}  \bigl( \beta _{\tau }^{-1} H_{\tau }
| \mathcal{F}_{t}  \bigr) = \mathbb{E}_{\mathbb{Q}}  \bigl( e^{-
\int _{t}^{\tau } r_{s} ds} H_{\tau } | \mathcal{F}_{t}  \bigr)
\end{equation}
where $\beta $ is the bank account process given in (\ref{eq3.2}) and $\mathbb{Q}$
denotes a risk-neutral pricing measure (cf. Eq. (3.1) in \cite{7}). We now
consider a call option written on the bond price $P \left (
\boldsymbol{\cdot },T \right )$ with maturity time\index{maturity time} $T$ satisfying
$T\geq \tau $. The payoff of the call option written on $P \left ( \tau ,T
\right )$ with deterministic strike price $K>0$ and maturity time $\tau $ is
then given by
\begin{equation}
\label{eq4.2}
H_{\tau } = \left [ P \left ( \tau ,T \right ) -K \right ]^{+} := \max
\left \{  0,P \left ( \tau ,T \right ) -K \right \}  .
\end{equation}In what follows, we define the Fourier transform, respectively inverse
Fourier transform,\index{inverse Fourier transform} of a real-valued deterministic function $p \left (
\boldsymbol{\cdot } \right ) \in \mathcal{L}^{1} \left ( \mathbb{R} \right )$
via
\begin{equation*}
\hat{p} \left ( y \right ) := \frac{1}{2\pi } \int _{\mathbb{R}} p
\left ( u \right ) e^{-iyu} du, p \left ( u \right ) = \int _{\mathbb{R}}
\hat{p} \left ( y \right ) e^{iyu} dy.
\end{equation*}

\begin{Proposition}\label{prop4.1}[call option on bond price] {For all}
$0\leq t\leq \tau \leq T$ {the price of a call option with payoff}
$H_{\tau }$ {given in} (\ref{eq4.2}), {strike price} $K>0$
{and maturity time\index{maturity time}} $\tau $ {can be expressed as}
\begin{align}
\label{eq4.3}
C_{t} &= \int _{\mathbb{R}} \hat{q} \left ( y \right ) \exp \Biggl\{  I_{t}
+\theta \left ( \tau ,y \right ) + \sum _{k=1}^{n} \overline{\psi }_{k} \left (
t,\tau ,y \right )
\nonumber\\
&\quad  + \sum _{k=1}^{n} \int _{0}^{t} \int _{D_{k}} \eta _{k} \left (
s,z,y \right ) d N_{k} \left ( s,z \right ) \Biggr\}  dy
\end{align}
{where the integrated short rate process} $I_{t}$ {is such as
defined in} (\ref{eq3.25}) {and}
\begin{align}
\label{eq4.4}
\eta _{k} \left ( s,z,y \right ) &:= \eta _{k} \left ( s,z,T,\tau ,y
\right )
\nonumber\\
&:= \sigma _{k} \left [ \left ( a+iy \right ) B_{k} \left ( s,T
\right ) - \left ( a+iy-1 \right ) B_{k} \left ( s,\tau \right ) \right ] z,
\nonumber \\
\hat{q} \left ( y \right ) &:= \frac{P \left ( 0,T \right )^{a+iy}}{2\pi \left ( a+iy \right ) \left ( a+iy-1 \right ) K^{a+iy-1}},
\nonumber\\
\overline{\psi }_{k} \left ( t,\tau ,y \right ) &:= \int _{t}^{\tau }
\int _{D_{k}}  \bigl[ e^{\eta _{k} \left ( s,z,y \right )} -1  \bigr] d \nu _{k}
\left ( z \right ) ds,
 \\
\theta \left ( \tau ,y \right ) &:= \left ( a+iy-1 \right ) \left (
\int _{0}^{\tau } \mu \left ( s \right ) ds - \sum _{k=1}^{n} x_{k} B_{k} \left (
0,\tau \right ) \right )
\nonumber\\
&\quad  - \left ( a+iy \right ) \sum _{k=1}^{n} \int _{0}^{\tau }
\int _{D_{k}} \zeta _{k} \left ( s,T,z \right ) d \nu _{k} \left ( z \right ) ds
\nonumber
\end{align}
{constitute deterministic functions, while} $a>1$ {is an
arbitrary real-valued constant. Herein, the functions} $\zeta _{k}$
{and} $B_{k}$ {are such as defined in} (\ref{eq3.10})
{and} (\ref{eq3.5}), {respectively, while} $P \left ( 0,T \right )$
{denotes the deterministic initial bond price}.
\end{Proposition}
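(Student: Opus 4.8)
The plan is to use the standard Fourier-transform pricing technique (the Carr--Madan damping approach). First I would express the call payoff in (\ref{eq4.2}) through a damping factor and an inverse Fourier transform. Specifically, set $g(x) := [e^{x} - K]^{+}$ with $x := \log P(\tau,T)$, and introduce the damped function $\tilde g(x) := e^{-ax} g(x)$ with damping parameter $a>1$ chosen so that $\tilde g \in \mathcal{L}^{1}(\mathbb{R})$. A direct computation of the Fourier transform of $\tilde g$ yields $\hat{\tilde g}(y) = K^{1-a-iy} / (2\pi (a+iy)(a+iy-1))$, which is exactly the source of the factor $K^{a+iy-1}$ in the denominator of $\hat q(y)$. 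Inverting, $[P(\tau,T)-K]^{+} = \int_{\mathbb{R}} P(\tau,T)^{a+iy} K^{1-a-iy}/(2\pi(a+iy)(a+iy-1))\, dy$.

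Next I would substitute this representation into the risk-neutral pricing formula (\ref{eq4.1}) and interchange the expectation with the $dy$-integral (justified by Fubini, using the integrability afforded by the exponential moment condition (\ref{eq2.5}) and $a>1$), so that
\begin{equation*}
C_{t} = \int_{\mathbb{R}} \frac{K^{1-a-iy}}{2\pi(a+iy)(a+iy-1)}\, \mathbb{E}_{\mathbb{Q}}\Bigl( e^{-\int_{t}^{\tau} r_{s}\, ds}\, P(\tau,T)^{a+iy}\,\big|\,\mathcal{F}_{t}\Bigr)\, dy.
\end{equation*}
The task then reduces to evaluating the conditional expectation inside. I would write $e^{-\int_{t}^{\tau} r_{s} ds} = e^{I_{t}} e^{-I_{\tau}}$ using $I_{t}$ from Proposition~\ref{prop3.11}, pull out the $\mathcal{F}_{t}$-measurable factor $e^{I_{t}}$, and express $I_{\tau}$ via (\ref{eq3.25}) and $P(\tau,T)$ via (\ref{eq3.12}) (equivalently (\ref{eq3.8}) together with (\ref{eq3.5}) for $A_{k},B_{k}$). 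Both $I_{\tau}$ and $\log P(\tau,T)$ are then affine combinations of the deterministic initial data and stochastic integrals $\int_{0}^{\tau}\int_{D_k}(\cdot)\, dN_k(s,z)$ against the same Poisson random measures. Combining the coefficients, the exponent of the expectand becomes a deterministic constant plus $\sum_{k=1}^{n}\int_{0}^{\tau}\int_{D_k} \sigma_k[(a+iy)B_k(s,T) - (a+iy-1)B_k(s,\tau)]z\, dN_k(s,z)$; this is precisely where the function $\eta_k(s,z,y)$ in (\ref{eq4.4}) arises, and the purely deterministic part collects into $\theta(\tau,y)$.

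Finally I would split the jump integral over $[0,\tau]$ into the $\mathcal{F}_{t}$-measurable part over $[0,t]$ — which produces the term $\sum_{k}\int_{0}^{t}\int_{D_k}\eta_k(s,z,y)\, dN_k(s,z)$ appearing inside the exponential in (\ref{eq4.3}) — and the increment over $[t,\tau]$. Conditioning on $\mathcal{F}_{t}$ and using the independence and stationarity of the increments of the compound Poisson processes $L^{k}$ together with the exponential L\'evy--Khinchin formula for additive processes (as already invoked in the proof of Proposition~\ref{prop3.1}), the conditional expectation of $\exp\{\sum_{k}\int_{t}^{\tau}\int_{D_k}\eta_k(s,z,y)\, dN_k(s,z)\}$ equals $\exp\{\sum_{k}\overline{\psi}_k(t,\tau,y)\}$ with $\overline{\psi}_k$ as in (\ref{eq4.4}). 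Assembling the $e^{I_t}$, $e^{\theta(\tau,y)}$, $e^{\overline{\psi}_k}$, the measurable jump term, and relabelling $\hat q(y) := P(0,T)^{a+iy}\cdot K^{1-a-iy}/(2\pi(a+iy)(a+iy-1))$ (the $P(0,T)^{a+iy}$ being absorbed from $\log P(0,T)$ inside $I_t$ and the bond-price exponent), we arrive at (\ref{eq4.3}). The main obstacle I anticipate is purely bookkeeping: carefully collecting the coefficients of the common stochastic integrators from $I_\tau$ and $\log P(\tau,T)$ so that they coalesce into the single function $\eta_k$, and confirming that all deterministic remainders land in $\theta(\tau,y)$; the analytic justifications (Fubini, finiteness of the relevant exponential moments) follow routinely from (\ref{eq2.5}) and $a>1$.
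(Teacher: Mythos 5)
Your proposal is correct and follows essentially the same route as the paper: the Carr--Madan damping/inverse-Fourier representation of the payoff, substitution of the bond price via (\ref{eq3.12}) and of $I_\tau$ via (\ref{eq3.25}), collection of the common jump integrators into $\eta_k$ with the deterministic remainder forming $\theta(\tau,y)$, and the split of $\int_0^\tau$ into $\int_0^t$ plus $\int_t^\tau$ handled by independent increments and the L\'evy--Khinchin formula to produce $\overline{\psi}_k$. The only (immaterial) differences are that you damp in the variable $\log P(\tau,T)$ whereas the paper damps the function $q(u)=e^{-au}[P(0,T)e^{u}-K]^{+}$ of $G_\tau$, and your parenthetical attributing part of the factor $P(0,T)^{a+iy}$ to $I_t$ is a slip --- it comes entirely from the bond-price representation (\ref{eq3.12}) --- but this does not affect the argument.
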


\begin{proof} We substitute (\ref{eq4.2}) and (\ref{eq3.12}) into (\ref{eq4.1}) and obtain
\begin{equation*}
C_{t} = \mathbb{E}_{\mathbb{Q}} \left ( e^{I_{t} - I_{\tau }}  \bigl[ P \left (
0,T \right ) e^{G_{\tau }} -K  \bigr]^{+} | \mathcal{F}_{t} \right )
\end{equation*}
where $I_{t}$ denotes the integrated short rate process defined in (\ref{eq3.25})
and
\begin{equation*}
G_{\tau } := I_{\tau } - \sum _{k=1}^{n} \int _{0}^{\tau } \int _{D_{k}}
\zeta _{k} \left ( s,T,z \right ) d \nu _{k} \left ( z \right ) ds +
\sum _{k=1}^{n} \int _{0}^{\tau } \int _{D_{k}} \sigma _{k} B_{k} \left ( s,T
\right ) z d N_{k} \left ( s,z \right )
\end{equation*}
is a real-valued stochastic process. For $u \in \mathbb{R}$ we introduce
the deterministic function
\begin{equation*}
q \left ( u \right ) := e^{-au} \left [ P \left ( 0,T \right ) e^{u} -K
\right ]^{+}
\end{equation*}
where $a>1$ is a constant real-valued dampening parameter ensuring the
integrability of the payoff function. Indeed, it holds $q \left (
\boldsymbol{\cdot } \right ) \in \mathcal{L}^{1} \left ( \mathbb{R} \right )$.
With the latter definition at hand, we obtain
\begin{equation*}
C_{t} = \mathbb{E}_{\mathbb{Q}}  \bigl( e^{I_{t} - I_{\tau } +a G_{\tau }} q
\left ( G_{\tau } \right ) | \mathcal{F}_{t}  \bigr).
\end{equation*}With reference to \cite{8} (also see \cite{18}), we apply the inverse Fourier
transform\index{inverse Fourier transform} on the latter equation and hereafter, use Fubini's theorem which
leads us to
\begin{equation*}
C_{t} = \int _{\mathbb{R}} \hat{q}  ( y  ) \mathbb{E}_{\mathbb{Q}}
 \bigl( e^{Z_{t,\tau }} | \mathcal{F}_{t}  \bigr) dy
\end{equation*}
where we have set
\begin{equation*}
Z_{t,\tau } := I_{t} - I_{\tau } + \left ( a+iy \right ) G_{\tau }
\end{equation*}
for all $0\leq t\leq \tau $. By merging the definition of $G_{\tau }$ and
(\ref{eq3.25}) into the definition of $Z_{t,\tau }$ we deduce
\begin{equation*}
Z_{t,\tau } = I_{t} +\theta \left ( \tau ,y \right ) + \sum _{k=1}^{n}
\int _{0}^{\tau } \int _{D_{k}} \eta _{k} \left ( s,z,y \right ) d N_{k} \left (
s,z \right )
\end{equation*}
where we identified the deterministic functions $\theta \left ( \tau ,y
\right )$ and $\eta _{k} \left ( s,z,y \right )$ defined in (\ref{eq4.4}). Hence,
\begin{align*}
\mathbb{E}_{\mathbb{Q}}  \bigl( e^{Z_{t,\tau }} | \mathcal{F}_{t}  \bigr)
&= \exp \left \{  I_{t} +\theta \left ( \tau ,y \right ) + \sum _{k=1}^{n}
\int _{0}^{t} \int _{D_{k}} \eta _{k} \left ( s,z,y \right ) d N_{k} \left ( s,z
\right ) \right \}
\\
&\quad  \times \mathbb{E}_{\mathbb{Q}} \left [ \exp \left \{
\sum _{k=1}^{n} \int _{t}^{\tau } \int _{D_{k}} \eta _{k} \left ( s,z,y \right ) d
N_{k} \left ( s,z \right ) \right \}  \right ]
\end{align*}
since $I_{t}$ is $\mathcal{F}_{t}$-adapted and $\theta \left ( \tau ,y
\right )$ is deterministic. In the derivation of the latter equation, we
used the independent increment property under $\mathbb{Q}$ of the involved
pure-jump integrals. We next apply the L\'{e}vy--Khinchin formula for
additive processes (see, e.g., \cite{9,30,36}) and derive
\begin{equation*}
\mathbb{E}_{\mathbb{Q}} \left [ \exp \left \{  \sum _{k=1}^{n} \int _{t}^{\tau }
\int _{D_{k}} \eta _{k} \left ( s,z,y \right ) d N_{k} \left ( s,z \right )
\right \}  \right ] = \exp \left \{  \sum _{k=1}^{n} \overline{\psi }_{k} \left (
t,\tau ,y \right ) \right \}
\end{equation*}
where the characteristic exponents $\overline{\psi }_{k} \left ( t,\tau ,y
\right )$ are such as defined in (\ref{eq4.4}). Putting the latter equations
together, we eventually end up with (\ref{eq4.3}). The expression for the Fourier
transform $\hat{q} \left ( y \right )$ is obtained by straightforward
calculations using the definition of the function $q \left ( u \right )$.
\end{proof}

\begin{Corollary}\label{cor4.2} {In the special case} $t=0$, {the
call option price formula} (\ref{eq4.3}) {simplifies to}
\begin{equation*}
C_{0} = \int _{\mathbb{R}} \hat{q} \left ( y \right ) \exp \left \{  \theta
\left ( \tau ,y \right ) + \sum _{k=1}^{n} \overline{\psi }_{k} \left ( 0,\tau ,y
\right ) \right \}  dy
\end{equation*}
{which is deterministic}.
\end{Corollary}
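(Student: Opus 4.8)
The plan is to specialize the general call price formula (\ref{eq4.3}) to the initial time $t=0$ and to track which terms vanish. First I would recall from Proposition~\ref{prop3.11} that $I_{t} := \int_{0}^{t} r_{s}\, ds$, so that $I_{0} = 0$ by the usual convention that an integral over a degenerate interval is zero. Next I would observe that every stochastic jump integral appearing in (\ref{eq4.3}), namely $\int_{0}^{t} \int_{D_{k}} \eta_{k}(s,z,y)\, d N_{k}(s,z)$, is taken over the time interval $[0,t]$; setting $t=0$ collapses this interval to a single point, so each such integral is likewise zero. Consequently the entire random exponent in (\ref{eq4.3}) reduces to the deterministic expression $\theta(\tau,y) + \sum_{k=1}^{n} \overline{\psi}_{k}(0,\tau,y)$.

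Then I would substitute these simplifications into (\ref{eq4.3}) to obtain the asserted representation for $C_{0}$. To verify that $C_{0}$ is deterministic, I would note that $\hat{q}(y)$ depends only on the deterministic data $P(0,T)$, $K$ and the constant $a$; that $\theta(\tau,y)$ is, by its definition in (\ref{eq4.4}), assembled from the deterministic ingredients $\mu(\cdot)$, the initial values $x_{k}$, the functions $B_{k}$ and $\zeta_{k}$ and the L\'{e}vy measures $\nu_{k}$; and that each $\overline{\psi}_{k}(0,\tau,y) = \int_{0}^{\tau} \int_{D_{k}} [ e^{\eta_{k}(s,z,y)} - 1 ]\, d \nu_{k}(z)\, ds$ is a deterministic parameter integral. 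Hence the integrand of the outer $y$-integral is deterministic, and therefore so is $C_{0}$.

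There is essentially no obstacle here: the statement is an immediate corollary of Proposition~\ref{prop4.1}, obtained by inserting $t=0$. The only point deserving a word of care is the well-definedness of the remaining $y$-integral, but this is already guaranteed by the validity of (\ref{eq4.3}) for arbitrary $t \in [0,\tau]$ and need not be re-established.
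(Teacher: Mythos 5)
Your argument is correct and is exactly the straightforward specialization the paper has in mind (the paper omits the proof as immediate): setting $t=0$ in (\ref{eq4.3}) kills $I_{0}$ and the jump integrals over the degenerate interval, leaving only the deterministic quantities $\hat{q}$, $\theta(\tau,y)$ and $\overline{\psi}_{k}(0,\tau,y)$. Nothing further is needed.
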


\section{Practical applications}\label{sec5}

In this section, we show how the short rate model introduced in Section~\ref{sec2}
can be implemented in practical applications. For this purpose, we now
present more detailed expressions in order to prepare our model for a
possible calibration of the involved parameters. First of all, let us
recall that the increasing compound Poisson processes $L_{t}^{k}$ defined
in (\ref{eq2.3}) for every $k\in \left \{  1,\dots ,n \right \}  $ and $t\in \left [ 0,T
\right ]$ can be expressed as
\begin{equation}
\label{eq5.1}
L_{t}^{k} = \sum _{j=1}^{N_{t}^{k}} Y_{j}^{k}
\end{equation}
(cf. Section 5.3.2 in \cite{37}) where $N_{t}^{k}$ constitutes a standard
Poisson process under $\mathbb{Q}$ with deterministic jump intensity
$\alpha _{k} >0$. That is, $N_{t}^{k} \sim Poi \left ( \alpha _{k} t \right )$
such that for all $m\in \mathbb{N}_{0}$ it holds
\begin{equation*}
\mathbb{Q}  \bigl( N_{t}^{k} =m  \bigr) = \frac{\left ( \alpha _{k} t
\right )^{m}}{m!} e^{- \alpha _{k} t}.
\end{equation*} The strictly positive jump amplitudes of the L\'{e}vy process $L_{t}^{k}$
are modeled by the i.i.d. random variables $Y_{1}^{k}, Y_{2}^{k},\dots $
which take values in the set $D_{k} \subseteq \left ] 0,\infty \right [$. We
recall that the random variables $Y_{1}^{k}, Y_{2}^{k},\dots $ are
independent of the Poisson processes $N_{t}^{\overline{k}}$ for all
combinations of indices $k, \overline{k} \in \left \{  1,\dots ,n \right \}  $.
We further put $c_{k} := \mathbb{E}_{\mathbb{Q}}  [ Y_{1}^{k}
 ] \in D_{k}$ and recall that the compensated compound Poisson process
$\left ( L_{t}^{k} - c_{k} \alpha _{k} t \right )_{t\in \left [ 0,T \right ]}$
constitutes an $\left ( \mathcal{F}_{t} ,\mathbb{Q} \right )$-martingale for
each $k$ which implies
\begin{equation*}
c_{k} \alpha _{k} = \int _{D_{k}} z d \nu _{k} \left ( z \right )
\end{equation*}
due to (\ref{eq2.3}) and (\ref{eq2.4}). We stress that the Poisson processes $N_{t}^{k}$
shall not be mixed up with the Poisson random measures $d N_{k} \left ( t,z
\right )$.

In the following, we propose a number of probability distributions living
on the positive half-line (recall Section B.1.2 in \cite{37}) which constitute
suitable candidates for the modeling of the jump size distribution in our
new short rate model. As a first example, we propose to work with the
\textit{gamma distribution} and thus, assume that each random variable
$Y_{j}^{k}$ is exponentially distributed under $\mathbb{Q}$ with parameter
$\varepsilon _{k} >0$ for all $j$ and $k$. In this case, the related
L\'{e}vy measure possesses the Lebesgue density
\begin{equation}
\label{eq5.2}
d \nu _{k} \left ( z \right ) = \alpha _{k} \varepsilon _{k} e^{-
\varepsilon _{k} z} dz
\end{equation}
where $z\in D_{k} = \left ] 0,\infty \right [$ and $k\in \left \{  1,\dots ,n
\right \}  $. We find $c_{k} = {1} / {\varepsilon _{k}}$ and $Y_{j}^{k} \sim
\Gamma \left ( 1, \varepsilon _{k} \right )$. Hence, following the notation
used in Section 5.5.1 in \cite{37}, we state that we presently are in a $\Gamma
\left ( \alpha _{k}, \varepsilon _{k} \right )$-Ornstein--Uhlenbeck process
setup (also see Section 8.2 in \cite{31} and Example 15.1 in \cite{9} in this
context).

\begin{Proposition}\label{prop5.1} {Suppose that the random variables}
$Y_{j}^{k}$\ in (\ref{eq5.1}) {are exponentially distributed} (i.e.
$\Gamma \left ( 1, \varepsilon _{k} \right )$-{distributed})
{under} $\mathbb{Q}$ {with parameters} $\varepsilon _{k} >0$
{for all} $j$ {and} $k$. {Then, for} $u \in
\mathbb{R}$ {and} $t\in \left [ 0,T \right ]$ {the characteristic
function of} $L_{t}^{k}$ {is given by}
\begin{equation*}
\Phi _{L_{t}^{k}} \left ( u \right ) = \exp \left \{  \frac{iu \alpha _{k} t}{\varepsilon _{k} -iu} \right \}
\end{equation*}
{where} $\alpha _{k}$ {denotes the jump intensity of the
standard Poisson process} $N_{t}^{k}$ {appearing in} (\ref{eq5.1}).
\end{Proposition}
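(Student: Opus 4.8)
The plan is to exploit the compound Poisson representation (\ref{eq5.1}) and condition on the value of the driving Poisson process $N_t^k$. Since the jump amplitudes $Y_1^k,Y_2^k,\dots$ are i.i.d.\ and independent of $N_t^k$, the tower property gives
\[
\Phi_{L_t^k}(u)=\mathbb{E}_{\mathbb{Q}}\bigl[e^{iuL_t^k}\bigr]
=\sum_{m=0}^{\infty}\mathbb{Q}\bigl(N_t^k=m\bigr)\bigl(\Phi_{Y_1^k}(u)\bigr)^m
=\sum_{m=0}^{\infty}\frac{(\alpha_k t)^m}{m!}\,e^{-\alpha_k t}\bigl(\Phi_{Y_1^k}(u)\bigr)^m,
\]
where $\Phi_{Y_1^k}(u):=\mathbb{E}_{\mathbb{Q}}[e^{iuY_1^k}]$ and where the interchange of expectation and summation is legitimate because $|\Phi_{Y_1^k}(u)|\le 1$, so the series converges absolutely. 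Summing the exponential series then yields the classical compound Poisson characteristic function $\Phi_{L_t^k}(u)=\exp\bigl\{\alpha_k t\,(\Phi_{Y_1^k}(u)-1)\bigr\}$.

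Next I would insert the characteristic function of the exponential law. Since $Y_1^k\sim\Gamma(1,\varepsilon_k)$ has density $\varepsilon_k e^{-\varepsilon_k z}$ on $\left]0,\infty\right[$, an elementary integration gives $\Phi_{Y_1^k}(u)=\varepsilon_k/(\varepsilon_k-iu)$ and hence $\Phi_{Y_1^k}(u)-1=iu/(\varepsilon_k-iu)$. Substituting this into the formula from the previous step immediately produces the asserted identity $\Phi_{L_t^k}(u)=\exp\bigl\{iu\alpha_k t/(\varepsilon_k-iu)\bigr\}$.

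Alternatively, and more in keeping with the L\'{e}vy--Khinchin machinery already used in the paper, one can bypass the series entirely: with the L\'{e}vy density $d\nu_k(z)=\alpha_k\varepsilon_k e^{-\varepsilon_k z}\,dz$ from (\ref{eq5.2}) one has $\Phi_{L_t^k}(u)=\exp\bigl\{t\int_{D_k}(e^{iuz}-1)\,d\nu_k(z)\bigr\}$, whose inner integral reduces, again by an elementary exponential integration, to $\alpha_k\varepsilon_k\bigl(\frac{1}{\varepsilon_k-iu}-\frac{1}{\varepsilon_k}\bigr)=iu\alpha_k/(\varepsilon_k-iu)$, giving the same answer; the validity of this representation is guaranteed by the integrability conditions (\ref{eq2.5}). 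I do not expect a genuine obstacle here --- the whole argument is a routine verification. The only points worth a remark are the absolute convergence of the Poisson series in the first approach and the applicability of the L\'{e}vy--Khinchin formula in the second, both of which are immediate.
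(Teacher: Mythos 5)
Your proposal is correct, and your second ("alternative") argument is precisely the paper's proof: the paper applies the L\'{e}vy--Khinchin formula with the Lebesgue density $d\nu_k(z)=\alpha_k\varepsilon_k e^{-\varepsilon_k z}\,dz$ from (\ref{eq5.2}) and then performs the elementary integration $\int_0^\infty\bigl(e^{iuz}-1\bigr)\varepsilon_k e^{-\varepsilon_k z}\,dz=\frac{iu}{\varepsilon_k-iu}$. Your primary route --- conditioning on $N_t^k$ and summing the Poisson series to get $\Phi_{L_t^k}(u)=\exp\{\alpha_k t(\Phi_{Y_1^k}(u)-1)\}$ --- is a genuinely more elementary derivation that avoids invoking the L\'{e}vy--Khinchin machinery altogether and only uses the compound Poisson representation (\ref{eq5.1}) together with the independence of the jump sizes from the Poisson counter; it buys self-containedness at the cost of not generalizing beyond the compound Poisson case, whereas the paper's (and your second) route is the one that extends directly to the time-inhomogeneous integrands appearing elsewhere in the paper (e.g.\ in Propositions~\ref{prop3.1} and~\ref{prop4.1}). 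Both computations are carried out correctly and yield the asserted identity.
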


\begin{proof} Successively applying the definition of the characteristic
function, (\ref{eq2.3}), the L\'{e}vy--Khinchin formula and (\ref{eq5.2}), for $u
\in \mathbb{R}$ and $t\in \left [ 0,T \right ]$ we obtain
\begin{equation*}
\Phi _{L_{t}^{k}} \left ( u \right ) = \mathbb{E}_{\mathbb{Q}}  \bigl[ e^{iu
L_{t}^{k}}  \bigr] = \exp \left \{  \alpha _{k} \varepsilon _{k} t
\int _{0}^{\infty }  \bigl[ e^{iuz} -1  \bigr] e^{- \varepsilon _{k} z} dz
\right \}  .
\end{equation*} We eventually perform the integration and end up with the asserted
equality.
\end{proof}

An immediate consequence of Proposition~\ref{prop5.1} is the following representation
for the moment generating function of $L_{t}^{k}$ being valid for all $v
\in \mathbb{R}\setminus  \left \{  \varepsilon _{k} \right \}  $
\begin{equation*}
\kappa _{L_{t}^{k}} \left ( v \right ) = \Phi _{L_{t}^{k}} \left ( -iv \right ) =
\exp \left \{  \frac{v \alpha _{k} t}{\varepsilon _{k} -v} \right \}  .
\end{equation*}

\begin{Proposition}\label{prop5.2} {Assume that the random variables}
$Y_{j}^{k}$\ in (\ref{eq5.1}) {are exponentially distributed} (i.e.
$\Gamma \left ( 1, \varepsilon _{k} \right )$-{distributed})
{under} $\mathbb{Q}$ {with parameters} $\varepsilon _{k} >0$
{for all} $j$ {and} $k$. {Then, for all} $t\in \left [
0,T \right ]$, $k\in \left \{  1,\dots ,n \right \}  $ {and} $x \in
\mathbb{R}$ {the probability density function of} $L_{t}^{k}$ {under}
$\mathbb{Q}$ {takes the form}
\begin{equation*}
f_{L_{t}^{k}} \left ( x \right ) = \frac{1}{2\pi } \int _{0}^{\infty } \exp
\left \{  iu \left ( x- \frac{\alpha _{k} t}{\varepsilon _{k} +iu} \right )
\right \}  du.
\end{equation*}
\end{Proposition}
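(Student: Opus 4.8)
The plan is to recover $f_{L_{t}^{k}}$ by Fourier inversion of the characteristic function established in Proposition~\ref{prop5.1}, namely $\Phi _{L_{t}^{k}}(u) = \exp\{ iu\alpha _{k}t/(\varepsilon _{k}-iu) \}$, $u\in\mathbb{R}$. Granting for the moment that $L_{t}^{k}$ possesses a density, the inversion theorem gives $f_{L_{t}^{k}}(x) = \frac{1}{2\pi}\int _{\mathbb{R}} e^{-iux}\,\Phi _{L_{t}^{k}}(u)\,du$. First I would make the change of variable $u\mapsto -u$, which replaces $e^{-iux}$ by $e^{iux}$ and $\Phi _{L_{t}^{k}}(u)$ by $\exp\{ -iu\alpha _{k}t/(\varepsilon _{k}+iu) \}$; collecting the two exponentials under one bracket yields the integrand $\exp\{ iu( x-\alpha _{k}t/(\varepsilon _{k}+iu) ) \}$, exactly the one appearing in the statement.

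Next I would pass from the integral over $\mathbb{R}$ to one over $(0,\infty)$ by exploiting the reality of $f_{L_{t}^{k}}$. Since $\alpha _{k}$, $\varepsilon _{k}$ and $t$ are real, we have $\Phi _{L_{t}^{k}}(-u)=\overline{\Phi _{L_{t}^{k}}(u)}$, so the integrand at $-u$ is the complex conjugate of the integrand at $u$; the contributions of the negative and the positive half-line are thus mutually conjugate, and combining them reduces the representation to an integral over $(0,\infty)$. Putting this together with the previous step gives the claimed formula.

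The one genuine subtlety — and the step I expect to be the main obstacle — is the tacit assumption that $L_{t}^{k}$ is absolutely continuous: in fact $\mathbb{Q}(L_{t}^{k}=0)=\mathbb{Q}(N_{t}^{k}=0)=e^{-\alpha _{k}t}>0$, so $L_{t}^{k}$ carries an atom at the origin, and accordingly $|\Phi _{L_{t}^{k}}(u)|\to e^{-\alpha _{k}t}>0$ as $|u|\to\infty$, so that $\Phi _{L_{t}^{k}}\notin\mathcal{L}^{1}(\mathbb{R})$ and the inversion integral does not converge absolutely. The clean remedy is to condition on $N_{t}^{k}$ via (\ref{eq5.1}): given $N_{t}^{k}=m\geq 1$ one has $L_{t}^{k}\sim\Gamma(m,\varepsilon _{k})$, whose density $z\mapsto \varepsilon _{k}^{m}z^{m-1}e^{-\varepsilon _{k}z}/(m-1)!$ on $(0,\infty)$ is genuinely integrable and has characteristic function $(\varepsilon _{k}/(\varepsilon _{k}-iu))^{m}$; inverting each summand, weighting by $\mathbb{Q}(N_{t}^{k}=m)=(\alpha _{k}t)^{m}e^{-\alpha _{k}t}/m!$ and summing over $m$ reproduces the formal inversion above, the atom being exactly the non-vanishing $|u|\to\infty$ tail of $\Phi _{L_{t}^{k}}$. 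Reading $f_{L_{t}^{k}}$ as this mixed (Dirac-plus-absolutely-continuous) density, the two steps above go through and deliver the asserted representation.
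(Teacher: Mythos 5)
Your overall route is the same as the paper's: both proofs recover $f_{L_{t}^{k}}$ by Fourier-inverting the characteristic function of Proposition~\ref{prop5.1}, and your extra observation is legitimate and goes beyond the paper — since $\mathbb{Q}(L_{t}^{k}=0)=e^{-\alpha _{k}t}>0$, the law of $L_{t}^{k}$ has an atom at the origin, $\Phi _{L_{t}^{k}}\notin \mathcal{L}^{1}(\mathbb{R})$, and the inversion only makes rigorous sense through the compound-Poisson decomposition (conditioning on $N_{t}^{k}=m$ with $\Gamma (m,\varepsilon _{k})$ summands and an $e^{-\alpha _{k}t}\delta _{0}$ part), a point the paper's proof passes over in silence: it simply writes $\Phi _{L_{t}^{k}}(-u)=\int _{0}^{\infty }e^{-iux}f_{L_{t}^{k}}(x)\,dx=2\pi \hat{f}_{L_{t}^{k}}(u)$ and then ``applies the inverse Fourier transform''.

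There is, however, a concrete flaw in the one step you treat as routine, namely the passage from $\mathbb{R}$ to $(0,\infty )$. With $g(u):=e^{iux}\Phi _{L_{t}^{k}}(-u)$ the conjugate symmetry $g(-u)=\overline{g(u)}$ gives $\int _{\mathbb{R}}g(u)\,du=2\,\mathrm{Re}\int _{0}^{\infty }g(u)\,du$, hence $f_{L_{t}^{k}}(x)=\frac{1}{\pi }\,\mathrm{Re}\int _{0}^{\infty }\exp \{ iu ( x-\alpha _{k}t/(\varepsilon _{k}+iu) ) \}\,du$, which is not the asserted $\frac{1}{2\pi }\int _{0}^{\infty }\exp \{ \cdots \}\,du$: the latter is in general complex-valued and in any case differs by the factor $2$ versus the real part. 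So ``combining the two mutually conjugate halves'' cannot justify the half-line formula as written. To be fair, the paper's own proof contains exactly the same leap — its inversion formula, which by the definition following (\ref{eq4.2}) runs over all of $\mathbb{R}$, is restated with $\int _{0}^{\infty }$ and an unchanged prefactor $\frac{1}{2\pi }$, the restriction to nonnegative values being a property of the support of $f_{L_{t}^{k}}$ in $x$, not of the $u$-integral. In substance you match the paper's argument and improve on it regarding absolute continuity, but the half-line reduction is precisely where both your justification and the paper's are not tight.
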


\begin{proof} First, note that it holds
\begin{equation*}
\Phi _{L_{t}^{k}} \left ( -u \right ) = \int _{0}^{\infty } e^{-iux}
f_{L_{t}^{k}} \left ( x \right ) dx =2\pi \hat{f}_{L_{t}^{k}} \left ( u
\right )
\end{equation*}
due to the definitions of the characteristic function and the Fourier
transform claimed in the sequel of (\ref{eq4.2}). We next apply the inverse Fourier
transform\index{inverse Fourier transform} which yields the density function
\begin{equation*}
f_{L_{t}^{k}} \left ( x \right ) = \frac{1}{2\pi } \int _{0}^{\infty }
\Phi _{L_{t}^{k}} \left ( -u \right ) e^{iux} du.
\end{equation*}
We finally plug the result of Proposition~\ref{prop5.1} into the latter equation
which completes the proof.
\end{proof}

We stress that Eq.~(\ref{eq5.2}) can be substituted into the corresponding formulas
appearing in the previous Propositions~\ref{prop2.2}, \ref{prop2.3}, \ref{prop3.1}, \ref{prop3.3},
\ref{prop3.7}--\ref{prop3.10} and \ref{prop4.1}
yielding more explicit expressions for the involved entities, yet
associated with gamma-distributed jump amplitudes in the underlying short
rate model. We illustrate this statement by an application of Eq.~(\ref{eq5.2}) on
Proposition~\ref{prop2.3}. The precise result reads as follows.

\begin{Proposition}\label{prop5.3} {Suppose that the random variables}
$Y_{j}^{k}\ in$ (\ref{eq5.1}) {are exponentially distributed} (i.e.
$\Gamma \left ( 1, \varepsilon _{k} \right )$-{distributed})
{under} $\mathbb{Q}$ {with parameters} $\varepsilon _{k} >0$
{for all} $j$ {and} $k$. {Let} $\sigma _{k} >0$
{be the constant volatility coefficients introduced in} (\ref{eq2.2}).
{Then, for all} $t\in \left [ 0,T \right ]$ {and} $v
\in \mathbb{R}$ {with} $v< \min _{k} \left \{  {\varepsilon _{k}} /
{\sigma _{k}} \right \}  $, $k\in \left \{  1,\dots ,n \right \}  $, {the
moment generating function under} $\mathbb{Q}$ {of the short rate
process} $r_{t}$ {reads as}
\begin{equation*}
\kappa _{r_{t}} \left ( v \right ) = \Phi _{r_{t}} \left ( -iv \right ) = \exp
\left \{  v \mu \left ( t \right ) + \sum _{k=1}^{n} \rho _{k} \left ( t,-iv
\right ) + \sum _{k=1}^{n} \psi _{k} \left ( t,-iv \right ) x_{k} \right \}
\end{equation*}
{with deterministic functions}
\begin{equation*}
\psi _{k} \left ( t,-iv \right ) =v e^{- \lambda _{k} t},\qquad  \rho _{k} \left ( t,-iv
\right ) = \frac{\alpha _{k}}{\lambda _{k}} \log \left \vert \frac{\varepsilon _{k} -v \sigma _{k} e^{- \lambda _{k} t}}{\varepsilon _{k}
-v \sigma _{k}} \right \vert .
\end{equation*}
\end{Proposition}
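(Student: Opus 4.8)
The plan is to start from the characteristic-function decomposition established in Proposition~\ref{prop2.3} together with the elementary identity $\kappa_{r_t}(v) = \Phi_{r_t}(-iv)$ recorded after~(\ref{eq2.9}). Substituting $u = -iv$ into $\Phi_{r_t}(u) = e^{iu\mu(t)}\prod_{k=1}^n e^{\psi_k(t,u)x_k + \rho_k(t,u)}$ immediately gives $\kappa_{r_t}(v) = \exp\{v\mu(t) + \sum_{k=1}^n \psi_k(t,-iv)x_k + \sum_{k=1}^n \rho_k(t,-iv)\}$, and from $\psi_k(t,u) = iu\,e^{-\lambda_k t}$ one reads off $\psi_k(t,-iv) = v\,e^{-\lambda_k t}$ with no further work. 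So the entire content of the statement is the explicit evaluation of $\rho_k(t,-iv)$ under the gamma specification.

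First I would insert the L\'{e}vy density~(\ref{eq5.2}), $d\nu_k(z) = \alpha_k\varepsilon_k e^{-\varepsilon_k z}\,dz$ on $D_k = \,]0,\infty[\,$, into the expression for $\rho_k$ coming from Proposition~\ref{prop2.3} with $u=-iv$, namely
\[ \rho_k(t,-iv) = \int_0^t \int_{D_k} \bigl[e^{v\sigma_k e^{-\lambda_k(t-s)}z} - 1\bigr]\,d\nu_k(z)\,ds . \]
Writing $\beta_k(s) := v\sigma_k e^{-\lambda_k(t-s)}$, the inner integral is a difference of two elementary exponential integrals, $\int_0^\infty (e^{\beta_k(s)z} - 1)e^{-\varepsilon_k z}\,dz = \frac{1}{\varepsilon_k - \beta_k(s)} - \frac{1}{\varepsilon_k} = \frac{\beta_k(s)}{\varepsilon_k(\varepsilon_k - \beta_k(s))}$, which converges precisely when $\beta_k(s) < \varepsilon_k$. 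Since $s\in[0,t]$ forces $e^{-\lambda_k(t-s)}\in(0,1]$, the hypothesis $v < \min_k\{\varepsilon_k/\sigma_k\}$ guarantees $\beta_k(s) \le v\sigma_k < \varepsilon_k$ uniformly in $s$, so the inner integral is finite throughout $[0,t]$.

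Next I would carry out the remaining $s$-integration after cancelling $\varepsilon_k$:
\[ \rho_k(t,-iv) = \alpha_k v\sigma_k \int_0^t \frac{e^{-\lambda_k(t-s)}}{\varepsilon_k - v\sigma_k e^{-\lambda_k(t-s)}}\,ds . \]
The substitution $w = e^{-\lambda_k(t-s)}$, $dw = \lambda_k w\,ds$, turns this into $\frac{\alpha_k v\sigma_k}{\lambda_k}\int_{e^{-\lambda_k t}}^{1}\frac{dw}{\varepsilon_k - v\sigma_k w}$, whose antiderivative is $-\frac{1}{v\sigma_k}\log|\varepsilon_k - v\sigma_k w|$; evaluating at the two endpoints produces exactly $\frac{\alpha_k}{\lambda_k}\log\bigl|\frac{\varepsilon_k - v\sigma_k e^{-\lambda_k t}}{\varepsilon_k - v\sigma_k}\bigr|$. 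Collecting the three pieces $v\mu(t)$, $\sum_k\psi_k(t,-iv)x_k$ and $\sum_k\rho_k(t,-iv)$ yields the asserted formula.

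I do not expect a serious obstacle: the L\'{e}vy--Khinchin representation underpinning Proposition~\ref{prop2.3} has already been justified, and everything afterwards is a pair of explicit integrals. The only point requiring care is the bookkeeping of the convergence condition — verifying that $v < \min_k\{\varepsilon_k/\sigma_k\}$ is exactly what keeps $\varepsilon_k - v\sigma_k e^{-\lambda_k(t-s)}$ bounded away from $0$ for every $s\in[0,t]$ — and observing that under this condition both arguments of the logarithm are in fact positive, so the absolute-value bars in the final expression are a harmless safeguard rather than a necessity.
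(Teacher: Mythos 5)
Your proposal is correct and follows essentially the same route as the paper's own proof: apply Proposition~\ref{prop2.3} with $u=-iv$, insert the gamma L\'{e}vy density~(\ref{eq5.2}) into $\rho_k$, and evaluate the resulting double integral (the paper packages the integrand via $b_k(s,v):=v\sigma_k e^{-\lambda_k(t-s)}-\varepsilon_k$ and simply states that the integration is performed, whereas you carry it out explicitly). Your bookkeeping of the condition $v<\min_k\{\varepsilon_k/\sigma_k\}$ matches the paper's observation that $b_k(s,v)<0$ on $[0,t]$, so no gap remains.
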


\begin{proof} For each $k\in \left \{  1,\dots ,n \right \}  $ we define the
deterministic functions $b_{k} \left ( s,v \right ) :=v\ \sigma _{k}
\ e^{- \lambda _{k} \left ( t-s \right )} - \varepsilon _{k}$ which satisfy
$b_{k} \left ( s,v \right ) <0$ whenever $s\in \left [ 0,t \right ]$ and $v<
\min _{k} \left \{  {\varepsilon _{k}} / {\sigma _{k}} \right \}  $. In this
setting, we combine Eq.~(\ref{eq5.2}) with the definitions of $\rho _{k}$ and
$\Lambda _{k}$ given in Proposition~\ref{prop2.3} and obtain
\begin{equation*}
\rho _{k} \left ( t,-iv \right ) =- \alpha _{k} \int _{0}^{t}
\frac{\varepsilon _{k} + b_{k} \left ( s,v \right )}{b_{k} \left ( s,v
\right )} ds.
\end{equation*}
We perform the integration and obtain the formula for $\rho _{k}$ claimed in
the proposition. The representation for the moment generating function
$\kappa _{r_{t}} \left ( v \right )$ finally follows from Proposition~\ref{prop2.3}.
\end{proof}

Other distributional choices for the random variables $Y_{j}^{k}$ modeling
the jump amplitudes would be, for example, the inverse Gaussian
distribution (see Section 5.5.2 in \cite{37}), the generalized inverse Gaussian
distribution (see Section 5.3.5 in \cite{37}) or the tempered stable
distribution (see Section 5.3.6 in \cite{37}). The related formulas for the
Lebesgue density of the L\'{e}vy measure $d \nu _{k} \left ( z \right )$
corresponding to Eq.~(\ref{eq5.2}) can be found in \cite{37}.

\begin{Remark}\label{rem5.4} We recall that the time-homogeneous compound Poisson
processes $L_{t}^{k}$ introduced in (\ref{eq2.3}) can be simulated according to
Algorithms 6.1 and 6.2 in \cite{9}. Further, in our model it is easily possible
to calculate the moments of $X_{t}^{k}$ and $r_{t}$ (see the sequel of
Proposition~\ref{prop2.2}) so that our model can be fitted to any yield curve
observed in the market by using the \textit{moment estimation method}
described in Section 7.2.2 in \cite{9}. This procedure is also called
\textit{moment matching}, as the underlying idea is to make the empirical
moments match with the theoretical moments of the model by finding a
suitable parameter set.
\end{Remark}

\section{A post-crisis model extension}\label{sec6}

In this section, we propose a post-crisis extension of the pure-jump
lower-bounded short rate model introduced in Section~\ref{sec2}. (To read more on
post-crisis interest rate models, the reader is referred to \cite{11,12,13,14,17,22,23,24,25,26,32,33,34}.)
Inspired by the modeling setups presented in \cite{33} and Chapter 2 in \cite{26},
for all $t\in \left [ 0,T \right ]$ we define the short rate spread under
$\mathbb{Q}$ by the stochastic process
\begin{equation*}
s_{t} := \mu ^{*} \left ( t \right ) + \sum _{k=n+1}^{l} X_{t}^{k}
\end{equation*}
showing a similar structure as (\ref{eq2.1}). Herein, $\mu ^{*} \left ( t \right )
\geq 0$ constitutes an integrable real-valued deterministic function and the
factors $X_{t}^{k}$ satisfy the SDE (\ref{eq2.2}), but presently for indices $k\in
\left \{  n+1,\dots ,l \right \}  $ where $l \in \mathbb{N}$ with $l>n$. Note
that it holds $s_{t} \geq \mu ^{*} \left ( t \right )\ \mathbb{Q}$-a.s. for
all $t\in \left [ 0,T \right ]$ such that the short rate spread is bounded
from below -- similar to the short rate itself [recall Remark~\ref{rem2.1} (a)]. We
interpret $s_{t}$ as an \textit{additive} spread and therefore set for all
$t\in \left [ 0,T \right ]$
\begin{equation}
\label{eq6.1}
\overline{r}_{t} := r_{t} + s_{t}
\end{equation}
(cf. \cite{12,33}) where $r_{t}$ denotes the short rate process and
$\overline{r}_{t}$ is called fictitious short rate, similarly to \cite{26}. With
reference to p. 46 in \cite{26}, we recall that the short rate spread $s_{t}$
not only incorporates credit risks, but also various other risks in the
interbank sector which affect the evolution of the LIBOR rates. Let us
moreover mention that the short rate $r_{t}$ defined in (\ref{eq2.1}) and the short
rate spread $s_{t}$ can be `correlated' by allowing for (at least) one
common factor in their respective definitions. More precisely, if the sum
in the definition of $s_{t}$ started running from $k=n$ (instead of
$k=n+1$), then the factor $X_{t}^{n}$ would appear both in the definition
of $r_{t}$ and in the definition of $s_{t}$ such that the two latter
stochastic processes would no longer be independent.

We next substitute (\ref{eq2.1}) as well as the definition of $s_{t}$ into (\ref{eq6.1})
and deduce
\begin{equation}
\label{eq6.2}
\overline{r}_{t} = \overline{\mu } \left ( t \right ) + \sum _{k=1}^{l}
X_{t}^{k}
\end{equation}
where we introduced the real-valued deterministic function $\overline{\mu }
\left ( t \right ) :=\mu \left ( t \right ) + \mu ^{*} \left ( t
\right )$. It obviously holds $\overline{r}_{t} \geq \overline{\mu } \left ( t
\right )\ \mathbb{Q}$-a.s. for all $t\in \left [ 0,T \right ]$. In accordance
to Section 3.4.1 in \cite{14}, Eq. (2.35) in \cite{26} and Section 1 in \cite{33}, we
define the fictitious bond price in our post-crisis short rate model
via
\begin{equation}
\label{eq6.3}
\overline{P} \left ( t,T \right ) := \mathbb{E}_{\mathbb{Q}} \left (
\exp \left \{  - \int _{t}^{T} \overline{r}_{u} du \right \}  \bigg|
\mathcal{F}_{t} \right )
\end{equation}
where $t\in \left [ 0,T \right ]$. The object $\overline{P} \left ( t,T
\right )$ is sometimes called artificial bond price in the literature, as it
is not physically traded in the market. Evidently, $\overline{P} \left ( T,T
\right ) =1$. Comparing (\ref{eq6.2}) with (\ref{eq2.1}) and (\ref{eq6.3}) with (\ref{eq3.3}), we see that
all our single-curve results presented in the previous sections carry over
to the currently considered post-crisis case. More precisely, the entities
$\mu \left ( t \right )$, $r_{t}$, $P \left ( t,T \right )$ and $n$ emerging in
the previous single-curve equations presently have to be replaced by
$\overline{\mu } \left ( t \right )$, $\overline{r}_{t}$, $\overline{P} \left (
t,T \right )$ and $l$, respectively. Moreover, in the present case, we
observe $\mathbb{Q}$-a.s. for all $t\in \left [ 0,T \right ]$
\begin{equation*}
0< \overline{P} \left ( t,T \right ) \leq \exp \left \{  - \int _{t}^{T}
\overline{\mu } \left ( u \right ) du \right \}
\end{equation*}
due to (\ref{eq6.3}), the lower boundedness of $\overline{r}_{t}$ and the
monotonicity of conditional expectations.

\begin{Proposition}\label{prop6.1} {It holds} $\mathbb{Q}$-a.s. for
all $t\in \left [ 0,T \right ]$
\begin{equation}
\label{eq6.4}
\overline{P} \left ( t,T \right ) \leq P \left ( t,T \right )
\end{equation}
{where} $\overline{P} \left ( t,T \right )$ {constitutes the
bond price defined in} (\ref{eq6.3}) {and} $P \left ( t,T \right )$
{is given in}~(\ref{eq3.3}).
\end{Proposition}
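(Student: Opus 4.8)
The plan is to compare the two bond prices directly through their defining conditional expectations in \eqref{eq3.3} and \eqref{eq6.3}. First I would recall from \eqref{eq6.1} and the definition of the short rate spread that $\overline{r}_{t} = r_{t} + s_{t}$ with $s_{t} = \mu^{*}(t) + \sum_{k=n+1}^{l} X_{t}^{k}$, where $\mu^{*}(t) \geq 0$ and each $X_{t}^{k}$ for $k \in \{n+1,\dots,l\}$ is a pure-jump zero-reverting Ornstein--Uhlenbeck process started at $X_{0}^{k} = x_{k} \geq 0$ and driven by an increasing subordinator $L_{t}^{k}$. By the same argument as in Remark~\ref{rem2.1}~(a), each such $X_{t}^{k}$ stays nonnegative $\mathbb{Q}$-a.s. (it reverts toward zero from above and receives only positive jumps), hence $s_{t} \geq \mu^{*}(t) \geq 0$ $\mathbb{Q}$-a.s. for all $t \in [0,T]$.

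Next I would integrate this pointwise inequality: since $s_{u} \geq 0$ $\mathbb{Q}$-a.s. for every $u$, we get $\int_{t}^{T} \overline{r}_{u}\, du = \int_{t}^{T} r_{u}\, du + \int_{t}^{T} s_{u}\, du \geq \int_{t}^{T} r_{u}\, du$ $\mathbb{Q}$-a.s. Applying the (strictly decreasing) exponential gives $\exp\{-\int_{t}^{T} \overline{r}_{u}\, du\} \leq \exp\{-\int_{t}^{T} r_{u}\, du\}$ $\mathbb{Q}$-a.s. Taking $\mathcal{F}_{t}$-conditional expectations under $\mathbb{Q}$ and using the monotonicity of conditional expectation then yields $\overline{P}(t,T) \leq P(t,T)$ $\mathbb{Q}$-a.s. for all $t \in [0,T]$, which is \eqref{eq6.4}. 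One should note that both conditional expectations are finite: $P(t,T) \leq M_{t,T}$ by \eqref{eq3.4}, and similarly $\overline{P}(t,T)$ is bounded above by $\exp\{-\int_{t}^{T} \overline{\mu}(u)\, du\}$, so no integrability issues arise.

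There is no real obstacle here; the argument is essentially the same monotonicity principle already invoked for \eqref{eq3.4}. The one point that deserves a line of justification is the nonnegativity of the additional factors $X_{t}^{n+1},\dots,X_{t}^{l}$ (equivalently $s_{t} \geq 0$), which follows because each is an Ornstein--Uhlenbeck process with nonnegative initial value, mean-reversion toward zero, and purely positive jumps coming from a subordinator $L_{t}^{k}$ --- exactly the property already used to establish the lower bound $r_{t} \geq \mu(t)$ in Remark~\ref{rem2.1}~(a) and restated for $s_{t}$ in the paragraph preceding \eqref{eq6.1}. Given that, the proof is a two-line application of monotonicity of the integral, the exponential, and conditional expectation. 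Since this is among the straightforward arguments, it could also simply be omitted per the convention stated after Proposition~\ref{prop2.2}, but I would include the short version above for completeness.
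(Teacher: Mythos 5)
Your proposal is correct and follows essentially the same route as the paper's proof: nonnegativity of the spread $s_{t}$ (from $\mu ^{*} \geq 0$ and the subordinator-driven factors, as in Remark~\ref{rem2.1}~(a)) gives $\overline{r}_{t} \geq r_{t}$ via (\ref{eq6.1}), and then the pointwise inequality of the discount factors plus monotonicity of conditional expectation yields (\ref{eq6.4}). Your extra remarks on integrability and the explicit justification of $X_{t}^{k}\geq 0$ for $k>n$ are fine additions but do not change the argument.
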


\begin{proof} Note that taking $\mu ^{*} \left ( t \right ) \geq 0\ \forall \ t\in \left [ 0,T \right ]$ implies $s_{t} \geq 0\ \mathbb{Q}$-a.s. for all
$t\in \left [ 0,T \right ]$. In this case, we deduce $\overline{r}_{t} \geq
r_{t}\ \mathbb{Q}$-a.s. for all $t\in \left [ 0,T \right ]$ due to (\ref{eq6.1}).
Hence, we find
\begin{equation*}
\exp \left \{  - \int _{t}^{T} \overline{r}_{u} du \right \}  \leq \exp \left \{
- \int _{t}^{T} r_{u} du \right \}
\end{equation*}
$\mathbb{Q}$-a.s. for all $t\in \left [ 0,T \right ]$. We next take
conditional expectations with respect to $\mathcal{F}_{t}$ and $\mathbb{Q}$
in the latter inequality, hereafter apply the monotonicity of conditional
expectations and finally identify (\ref{eq6.3}) and (\ref{eq3.3}) in the resulting
inequality.
\end{proof}

The result obtained in Proposition~\ref{prop6.1} possesses the following economical
interpretation: The inequality (\ref{eq6.4}) shows that nontraded bonds are
cheaper than their nonfictitious counterparts which are physically traded
in the market. This feature appears economically reasonable and stands in
accordance with the modeling assumptions and statements in \cite{12}, Section
2.1.3 in \cite{26} and Section 2.1 in \cite{33}. Moreover, combining (\ref{eq6.3}) and
(\ref{eq3.18}), we obtain (parallel to \cite{12})
\begin{equation*}
\overline{P} \left ( t,T \right ) = \exp \left \{  - \int _{t}^{T} \overline{f}
\left ( t,u \right ) du \right \}
\end{equation*}
where $\overline{f}$ is sometimes called fictitious/artificial forward rate\index{forward rate}
in the literature. It holds $\overline{f} \left ( t,t \right ) =
\overline{r}_{t}$ for all $t\in \left [ 0,T \right ]$. With reference to \cite{11}
and \cite{12}, for all $t\in \left [ 0,T \right ]$ we introduce the forward rate
spread\index{forward rate spread} via
\begin{equation*}
g \left ( t,T \right ) := \overline{f} \left ( t,T \right ) -f \left (
t,T \right )
\end{equation*}
so that we have not only set up a new pure-jump post-crisis
\textit{short} rate model, but simultaneously a new pure-jump post-crisis
\textit{forward} rate\index{forward rate} model of HJM-type in the current section. Recall that
(\ref{eq6.4}) is equivalent to $f \left ( t,u \right ) \leq \overline{f} \left ( t,u
\right )\ \mathbb{Q}$-a.s. for all $0\leq t\leq u\leq T$. From this, we
conclude that $g \left ( t,T \right ) \geq 0\ \mathbb{Q}$-a.s. for all $t\in
\left [ 0,T \right ]$. It further holds $g \left ( t,t \right ) = s_{t}$ for
all $t\in \left [ 0,T \right ]$ due to (\ref{eq6.1}).

Furthermore, in the present post-crisis setting, for a time partition
$0\leq t\leq T_{1} < T_{2}$ we define the (forward) overnight indexed swap\index{overnight indexed swap (OIS) rate}
(OIS) rate via
\begin{equation*}
F \left ( t, T_{1}, T_{2} \right ) := \frac{1}{\delta } \left (
\frac{P \left ( t, T_{1} \right )}{P \left ( t, T_{2} \right )} -1 \right )
\end{equation*}
(cf. Eq. (4.1) in \cite{26}) where $P$ denotes the zero-coupon bond price
defined in (\ref{eq3.3}) and $\delta :=\delta \left ( T_{1}, T_{2} \right )$
is the year fraction with expiry date $T_{1}$ and maturity date $T_{2}$.
With reference to \cite{12}, for $0\leq t\leq T_{1} < T_{2}$ we define the
forward LIBOR rate via
\begin{equation*}
L \left ( t, T_{1}, T_{2} \right ) := \frac{1}{\delta } \left (
\frac{\overline{P} \left ( t, T_{1} \right )}{\overline{P} \left ( t, T_{2}
\right )} -1 \right )
\end{equation*}
where $\delta $ is the year fraction and $\overline{P}$ denotes the bond
price introduced in (\ref{eq6.3}). Note that the LIBOR rate $L \left ( t, T_{1},
T_{2} \right )$ shall not be mixed up with the L\'{e}vy processes
$L_{t}^{k}$ defined in (\ref{eq2.3}). In a pre-crisis single-curve approach, it
holds $\overline{P} \left ( t,T \right ) =P \left ( t,T \right )$ which implies
$F \left ( t, T_{1}, T_{2} \right ) =L \left ( t, T_{1}, T_{2} \right )\ \mathbb{Q}$-a.s. for all $t$.

We are now prepared to derive the dynamics of the short rate spread
$s_{t}$, the fictitious short rate $\overline{r}_{t}$, the bond price
$\overline{P} \left ( t,T \right )$, the forward rate\index{forward rate} $\overline{f} \left (
t,T \right )$, the forward rate spread\index{forward rate spread} $g \left ( t,T \right )$ and the LIBOR
rate $L \left ( t, T_{1}, T_{2} \right )$. The associated computations can be
accomplished by similar techniques as presented in Sections~\ref{sec2} and \ref{sec3} and
thus, are not worked out explicitly. We
provide as an example two results
without proofs in the sequel. For all $t\in \left [ 0,T \right ]$ it holds
\begin{equation*}
\frac{d \overline{P} \left ( t,T \right )}{\overline{P} \left ( t-,T \right )}
= \overline{r}_{t} dt+ \sum _{k=1}^{l} \int _{D_{k}} \zeta _{k} \left ( t,T,z
\right ) d \tilde{N}_{k}^{\mathbb{Q}} \left ( t,z \right )
\end{equation*}
where the functions $\zeta _{k}$ are such as claimed in (\ref{eq3.10}). We further
obtain in the post-crisis case
\begin{align*}
L \left ( t, T_{1}, T_{2} \right ) &= \frac{1}{\delta } \Biggl(
\frac{\overline{P} \left ( 0, T_{1} \right )}{\overline{P} \left ( 0, T_{2}
\right )} \times \prod _{k=1}^{l} \exp  \Biggl\{  \int _{0}^{t} \int _{D_{k}}
\Psi _{k} \left ( s,z \right ) d \nu _{k} \left ( z \right ) ds
\\
&\quad  + \int _{0}^{t}
\int _{D_{k}} \Sigma _{k} \left ( s,z \right ) d N_{k} \left ( s,z \right )
\Biggr\}  -1  \Biggr)
\end{align*}
with deterministic functions
\begin{align*}
\Psi _{k} \left ( s,z \right ) &:= e^{\sigma _{k} B_{k} \left ( s, T_{2}
\right ) z} - e^{\sigma _{k} B_{k} \left ( s, T_{1} \right ) z} <0,
\\
 \Sigma _{k}
\left ( s,z \right ) &:= \sigma _{k} \left [ B_{k} \left ( s, T_{1}
\right ) - B_{k} \left ( s, T_{2} \right ) \right ] z>0.
\end{align*}





\end{document}